\let\R\Real
\let\Z\Integer
\let\N\Natural
\def\map#1#2#3{#1 : #2 \rightarrow #3}
\def\deq{\stackrel{\scriptscriptstyle\triangle}{=}}
\def\argmin{\operatornamewithlimits{arg\,min}}
\def\E{\operatorname{{\mathbb E}}}
\def\Pr{\operatorname{{\mathbb P}}}
\def\cA{{\cal A}}
\def\cC{{\cal C}}
\def\cF{{\cal F}}
\def\cI{{\cal I}}
\def\cM{{\cal M}}
\def\cP{{\cal P}}
\def\cQ{{\cal Q}}
\def\cS{{\cal S}}
\def\cX{{\cal X}}
\def\cY{{\cal Y}}
\def\cZ{{\cal Z}}
\def\sS{{\sf S}}
\def\sV{{\sf V}}
\def\hcX{\widehat{\cal X}}
\def\hX{\widehat{X}}
\def\hx{\widehat{x}}
\def\W{{\mathbb W}}
\def\dvar{d}
\def\bd#1{\boldsymbol{#1}}
\def\wh#1{\widehat{#1}}
\def\td#1{\widetilde{#1}}
\newtheorem{theorem}{Theorem}[section]
\newtheorem{remark}{Remark}[section]
\newtheorem{lemma}{Lemma}[section]
\begin{document}

\title{Joint Universal Lossy Coding and\\
Identification of Stationary Mixing\\
Sources with General Alphabets}

\author{Maxim Raginsky,~\IEEEmembership{Member,~IEEE}%
\thanks{The material in this paper was presented in part at the IEEE International Symposium on Information Theory, Nice, France, June 2007. This work
    was supported by the Beckman Institute Fellowship.}%
  \thanks{M.~Raginsky was with the Beckman Institute for Advanced
    Science and Technology, University of
Illinois, Urbana, IL 61801 USA. He is now with the Department of Electrical and Computer Engineering, Duke University, Durham, NC 27708 USA (e-mail:~m.raginsky@duke.edu).} }%

\maketitle
\thispagestyle{empty}


\begin{abstract}
We consider the problem of joint universal variable-rate lossy coding and identification for parametric classes of stationary $\beta$-mixing sources with general (Polish) alphabets. Compression performance is measured in terms of Lagrangians, while identification performance is measured by the variational distance between the true source and the estimated source. Provided that the sources are mixing at a sufficiently fast rate and satisfy certain smoothness and Vapnik--Chervonenkis learnability conditions, it is shown that, for bounded metric distortions, there exist universal schemes for joint lossy compression and identification whose Lagrangian redundancies converge to zero as $\sqrt{V_n \log n /n}$ as the block length $n$ tends to infinity, where $V_n$ is the Vapnik--Chervonenkis dimension of a certain class of decision regions defined by the $n$-dimensional marginal distributions of the sources; furthermore, for each $n$, the decoder can identify $n$-dimensional marginal of the active source up to a ball of radius $O(\sqrt{V_n\log n/n})$ in variational distance, eventually with probability one. The results are supplemented by several examples of parametric sources satisfying the regularity conditions. \\ \\
{\em Index Terms---}Learning, minimum-distance density estimation, two-stage codes, universal vector quantization, Vapnik--Chervonenkis dimension.
\end{abstract}

\section{Introduction}
\label{sec:intro}

It is well known that lossless source coding and statistical modeling are complementary objectives. This fact is captured by the Kraft inequality (see Section~5.2 in Cover and Thomas \cite{CovTho91}), which provides a correspondence between uniquely decodable codes and probability distributions on a discrete alphabet. If one has full knowledge of the source statistics, then one can design an optimal lossless code for the source, and {\em vice versa}. However, in practice it is unreasonable to expect that the source statistics are known precisely, so one has to design {\em universal} schemes that perform asymptotically optimally within a given class of sources. In universal coding, too, as Rissanen has shown in \cite{Ris84,Ris96}, the coding and modeling objectives can be accomplished jointly: given a sufficiently regular parametric family of discrete-alphabet sources, the encoder can acquire the source statistics via maximum-likelihood estimation on a sufficiently long data sequence and use this knowledge to select an appropriate coding scheme. Even in nonparametric settings (e.g., the class of all stationary ergodic discrete-alphabet sources), universal schemes such as Ziv--Lempel \cite{ZivLem78} amount to constructing a probabilistic model for the source. In the reverse direction, Kieffer \cite{Kie93a} and Merhav \cite{Mer94}, among others, have addressed the problem of statistical modeling (parameter estimation or model identification) via universal lossless coding.

Once we consider {\em lossy} coding, though, the relationship between coding and modeling is no longer so simple. On the one hand, having full knowledge of the source statistics is certainly helpful for designing optimal rate-distortion codebooks. On the other hand, apart from some special cases (e.g., for i.i.d. Bernoulli sources and the Hamming distortion measure or for i.i.d. Gaussian sources and the squared-error distortion measure), it is not at all clear how to extract a reliable statistical model of the source from its reproduction via a rate-distortion code (although, as shown recently by Weissman and Ordentlich \cite{WeiOrd05}, the joint empirical distribution of the source realization and the corresponding codeword of a ``good" rate-distortion code converges to the distribution solving the rate-distortion problem for the source). This is not a problem when the emphasis is on compression, but there are situations in which one would like to compress the source and identify its statistics at the same time. For instance, in {\em indirect adaptive control} (see, e.g., Chapter~7 of Tao \cite{Tao03}) the parameters of the plant (the controlled system) are estimated on the basis of observation, and the controller is modified accordingly. Consider the discrete-time stochastic setting, in which the plant state sequence is a random process whose statistics are governed by a finite set of parameters. Suppose that the controller is geographically separated from the plant and connected to it via a noiseless digital channel whose capacity is $R$ bits per use. Then, given the time horizon $T$, the objective is to design an encoder and a decoder for the controller to obtain reliable estimates of both the plant parameters and the plant state sequence from the $2^{TR}$ possible outputs of the decoder.

To state the problem in general terms, consider an information source emitting a sequence $\bd{X} = \{X_i\}_{i \in \Z}$ of random variables taking values in an alphabet $\cX$. Suppose that the process distribution of $\bd{X}$ is not specified completely, but it is known to be a member of some parametric class $\{ P_\theta : \theta \in \Lambda \}$. We wish to answer the following two questions:
\begin{enumerate}
\item Is the class $\{P_\theta : \theta \in \Lambda\}$ universally encodable with respect to a given single-letter distortion measure $\rho$, by codes with a given structure (e.g., all fixed-rate block codes with a given per-letter rate, all variable-rate block codes, etc.)? In other words, does there exist a scheme that is asymptotically optimal for each $P_\theta$, $\theta \in \Lambda$?
\item If the answer to Question 1) is positive, can the codes be constructed in such a way that the decoder can not only reconstruct the source, but also identify its process distribution $P_\theta$, in an asymptotically optimal fashion?
\end{enumerate}
In previous work \cite{Rag05,Rag06}, we have addressed these two questions in the context of fixed-rate lossy block coding of stationary memoryless (i.i.d.) continuous-alphabet sources with parameter space $\Lambda$ a bounded subset of $\R^k$ for some finite $k$. We have shown that, under appropriate regularity conditions on the distortion measure and on the source models, there exist joint universal schemes for lossy coding and source identification whose redundancies (that is, the gap between the actual performance and the theoretical optimum given by the Shannon distortion-rate function) and source estimation fidelity both converge to zero as $O\big(\sqrt{\log n/n}\big)$, as the block length $n$ tends to infinity. The code operates by coding each block with the code matched to the source with the parameters estimated from the preceding block. Comparing this convergence rate to the $\log n/n$ convergence rate, which is optimal for redundancies of fixed-rate lossy block codes \cite{YanZha99}, we see that there is, in general, a price to be paid for doing compression and identification simultaneously. Furthermore, the constant hidden in the $O(\cdot)$ notation increases with the ``richness" of the model class $\{P_\theta : \theta \in \Lambda\}$, as measured by the Vapnik--Chervonenkis (VC) dimension \cite{VC71} of a certain class of measurable subsets of the source alphabet associated with the sources.

The main limitation of the results of \cite{Rag05,Rag06} is the i.i.d. assumption, which is rather restrictive as it excludes many practically relevant model classes (e.g., autoregressive sources, or Markov and hidden Markov processes). Furthermore, the assumption that the parameter space $\Lambda$ is bounded may not always hold, at least in the sense that we may not know the diameter of $\Lambda$ {\em a priori}. In this paper we relax both of these assumptions and study the existence and the performance of universal schemes for joint lossy coding and identification of stationary sources satisfying a mixing condition, when the sources are assumed to belong to a parametric model class $\{P_\theta : \theta \in \Lambda\}$, $\Lambda$ being an open subset of $\R^k$ for some finite $k$. Because the parameter space is not bounded, we have to use variable-rate codes with countably infinite codebooks, and the performance of the code is assessed by a composite Lagrangian functional \cite{ChoLooGra89} which captures the trade-off between the expected distortion and the expected rate of the code. Our result is that, under certain regularity conditions on the distortion measure and on the model class, there exist universal schemes for joint lossy source coding and identification such that, as the block length $n$ tends to infinity, the gap between the actual Lagrangian performance and the optimal Lagrangian performance achievable by variable-rate codes at that block length, as well as the source estimation fidelity at the decoder, converge to zero as $O\big( \sqrt{ V_n \log n / n}\big)$, where $V_n$ is the VC dimension of a certain class of decision regions induced by the collection $\{P^n_\theta : \theta \in \Lambda\}$ of the $n$-dimensional marginals of the source process distributions.

This result shows very clearly that the price to be paid for universality, in terms of both compression and identification, grows with the richness of the underlying model class, as captured by the VC dimension sequence $\{V_n\}$. The richer the model class, the harder it is to learn, which affects the compression performance of our scheme because we use the source parameters learned from past data to decide how to encode the current block. Furthermore, comparing the rate at which the Lagrangian redundancy decays to zero under our scheme with the $O(\log n/n)$ result of Chou, Effros and Gray \cite{ChoEffGra96}, whose universal scheme is not aimed at identification, we immediately see that, in ensuring to satisfy the twin objectives of compression and modeling, we inevitably sacrifice some compression performance.

The paper is organized as follows. Section~\ref{sec:prelims} introduces notation and basic concepts related to sources, codes and Vapnik--Chervonenkis classes. Section~\ref{sec:results} lists and discusses the regularity conditions that have to be satisfied by the source model class, and contains the statement of our result. The result is proved in Section~\ref{sec:proof}. Next, in Section~\ref{sec:examples} we give three examples of parametric source families (namely, i.i.d. Gaussian sources, Gaussian autoregressive sources and hidden Markov processes) which fit the framework of this paper under suitable regularity conditions. We conclude in Section~\ref{sec:conclusion} and outline directions for future research. Finally, the Appendix contains some technical results on Lagrange-optimal variable-rate quantizers.

\section{Preliminaries}
\label{sec:prelims}

\subsection{Sources}
\label{ssec:sources}

In this paper, a {\em source} is a discrete-time stationary ergodic random process $\bd{X} = \{X_i\}_{i \in \Z}$ with alphabet $\cX$. We assume that $\cX$ is a Polish space (i.e., a complete separable metric space\footnote{The canonical example is the Euclidean space $\R^d$ for some $d < \infty$.}) and equip $\cX$ with its Borel $\sigma$-field. For any pair of indices $i,j \in \Z$ with $i < j$, let $X^j_i$ denote the segment $(X_i,X_{i+1},\ldots,X_j)$ of $\bd{X}$.  If $P$ is the process distribution of $\bd{X}$, then we let $\E_P\{\cdot\}$ denote expectation with respect to $P$, and let $P^n$ denote the marginal distribution of $X^n_1$. Whenever $P$ carries a subscript, e.g., $P = P_\theta$, we write $\E_\theta\{\cdot\}$ instead. We assume that there exists a fixed $\sigma$-finite measure $\mu$ on $\cX$, such that the $n$-dimensional marginal of any process distribution of interest is absolutely continuous with respect to the product measure $\mu^n$, for all $n \ge 1$. We denote the corresponding densities $dP^n/d\mu^n$ by $p^n$. To avoid notational clutter, we omit the superscript $n$ from $\mu^n$, $P^n$ and $p^n$ whenever it is clear from the argument, as in $d\mu(x^n)$, $dP(x^n)$ or $p(x^n)$.

Given two probability measures $P,Q$ on a measurable space $(\cZ,\cA)$, the {\em variational distance} between them is defined by
$$
\dvar(P,Q) \deq \sup_{\{A_i\} \subseteq \cA} \sum_i |P(A_i) - Q(A_i)|,
$$
where the supremum is over all finite $\cA$-measurable partitions of $\cZ$ (see, e.g., Section~5.2 of Gray \cite{Gra90a}). If $p$ and $q$ are the densities of $P$ and $Q$, respectively, with respect to a dominating measure $\nu$, then we can write
$$
\dvar(P,Q) = \int_\cZ |p(z) - q(z)|d\nu(z).
$$
A useful property of the variational distance is that, for any measurable function $\map{f}{\cZ}{[0,1]}$, $|\E_Pf - \E_Qf| \le d(P,Q)$. When $P$ and $Q$ are $n$-dimensional marginals of $P_\theta$ and $P_{\theta'}$, respectively, i.e., $P = P^n_\theta$ and $Q = P^n_{\theta'}$, we write $d_n(\theta,\theta')$ for $\dvar(P^n_\theta,P^n_{\theta'})$. If $\cA'$ is a $\sigma$-subfield of $\cA$, we define the variational distance $\dvar(P,Q;\cA')$ between $P$ and $Q$ with respect to $\cA'$ by
$$
\dvar(P,Q;\cA') \deq \sup_{\{A_i\} \subseteq \cA'} \sum_i|P(A_i) - Q(A_i)|,
$$
where the supermum is over all finite $\cA'$-measurable partitions of $\cZ$. Given a $\delta > 0$ and a probability measure $P$, the {\em variational ball} of radius $\delta$ around $P$ is the set of all probability measures $Q$ with $\dvar(P,Q) \le \delta$.

Given a source $\bd{X}$ with process distribution $P$, let $P^0_{-\infty}$ and $P^\infty_1$ denote the marginal distributions of $P$ on $\{X_i\}_{i \le 0}$ and $\{ X_i\}_{i \ge 1}$, respectively. For each $k \ge 1$, the {\em $k$th-order absolute regularity coefficient} (or {\em $\beta$-mixing coefficient}) of $P$ is defined as \cite{VolRoz59,VolRoz61}:
$$
\beta_P(k) \deq \sup \left\{ \sum_i \sum_j |P(A_i \cap B_j) - P^0_{-\infty}(A_i)P^\infty_1(B_j)| \right\},
$$
where the supremum is over all finite $\sigma(X^0_{-\infty})$-measurable partitions $\{A_i\}$ and all finite $\sigma(X^\infty_k)$-measurable partitions $\{B_j\}$. Observe that
\begin{equation}
\beta_P(k) = \dvar\left(P,P^0_{-\infty} \times P^\infty_1; \sigma(X^0_{-\infty},X^\infty_k)\right),
\end{equation}
the variational distance between $P$ and the product distribution $P^0_{-\infty} \times P^\infty_1$ with respect to the $\sigma$-algebra $\sigma(X^0_{-\infty},X^\infty_k)$. Since $\bd{X}$ is stationary, we can ``split" its process distribution at any point $l \in \Z$ and define $\beta_P(k)$ equivalently by
\begin{equation}
\beta_P(k) \deq \dvar\left(P,P^l_{-\infty}\times P^\infty_{l+1}; \sigma(X^l_{-\infty},X^\infty_{l+k}) \right).
\end{equation}
Again, if $P$ is subscripted by some $\theta$, $P = P_\theta$, then we write $\beta_\theta(k)$.

\subsection{Codes}
\label{ssec:codes}

The class of codes we consider here is the collection of all finite-memory variable-rate vector quantizers. Let $\hcX$ be a {\em reproduction alphabet}, also assumed to be Polish. We assume that $\cX \cup \hcX$ is a subset of a Polish metric space $\cY$ with a bounded metric $\rho_0(\cdot,\cdot)$: there exists some $\rho_{\max} < +\infty$, such that $\rho_0(y,y') \le \rho_{\max}$ for all $y,y' \in \cY$. We take $\map{\rho}{\cX \times \hcX}{[0,\rho_{\max}]}$, $\rho(x,\hx) \deq \rho_0(x,\hx)$, as our (single-letter) {\em distortion function}.  A variable-rate vector quantizer with block length $n$ and memory length $m$ is a pair $C^{n,m} = (f,\varphi)$, where $\map{f}{\cX^n \times \cX^m}{\cS}$ is the {\em encoder}, $\map{\varphi}{\cS}{\hcX^n}$ is the {\em decoder}, and $\cS \subseteq \{0,1\}^*$ is a countable collection of binary strings satisfying the prefix condition or, equivalently, the Kraft inequality
$$
\sum_{s \in \cS} 2^{-\ell(s)} \le 1,
$$
where $\ell(s)$ denotes the length of $s$ in bits. The mapping of the source $\bd{X}$ into the reproduction process $\bd{\hX}$ is defined by
$$
\hX^{n(k+1)}_{nk + 1} = \varphi\left(f\big(X^{n(k+1)}_{nk + 1},X^{nk}_{nk - m + 1}\big)\right), \qquad k \in \Z.
$$
That is, the encoding is done in blocks of length $n$, but the encoder is also allowed to observe the $m$ symbols immediately preceding each block. The {\em effective memory} of $C^{n,m}$ is defined as the set $\cM \subseteq \{1,\ldots,m\}$, such that
$$
f(x^m) = f(\td{x}^m), \qquad \forall x^m,\td{x}^m \in \cX^m : x_i = \td{x}_i, \forall i \in \cM.
$$
The size $|\cM|$ of $\cM$ is called the {\em effective memory length} of $C^{n,m}$. We shall often use $C^{n,m}$ to also denote the composite mapping $\varphi \circ f$: $\hX^n_1 = C^{n,m}(X^n_1,X^0_{-m+1})$. When the code has zero memory ($m=0$), we shall denote it more compactly by $C^n$.

The performance of the code on the source with process distribution $P$ is measured by its expected distortion
$$
D_P(C^{n,m}) \deq \E_P\left\{ \rho_n(X^n_1,\hX^n_1) \right\},
$$
where for $x^n \in \cX^n$ and $\hx^n \in \hcX^n$, $\rho_n(x^n,\hx^n) \deq n^{-1}\sum^n_{i=1}\rho(x_i,\hx_i)$ is the per-letter distortion incurred in reproducing $x^n$ by $\hx^n$, and by its expected rate
$$
R_P(C^{n,m}) \deq \E_P\left\{ \ell_n\left(f\left(X^n_1,X^0_{-m+1}\right)\right) \right\},
$$
where $\ell_n(s)$ denotes the length of a binary string $s$ in bits, normalized by $n$. (We follow Neuhoff and Gilbert \cite{NeuGil82} and normalize the distortion and the rate by the length $n$ of the {\em reproduction} block, not by the combined length $n+m$ of the source block plus the memory input.) When working with variable-rate quantizers, it is convenient \cite{ChoLooGra89,Lin01} to absorb the distortion and the rate into a single performance measure, the {\em Lagrangian distortion}
$$
L_P(C^{n,m},\lambda) \deq D_P(C^{n,m}) + \lambda R_P(C^{n,m}),
$$
where $\lambda > 0$ is the {\em Lagrange multiplier} which controls
the distortion-rate trade-off. Geometrically, $L_P(C^{n,m})$ is the $y$-intercept of the line with slope $-\lambda$, passing through the point $\left(R_P(C^{n,m}),D_P(C^{n,m})\right)$ in the rate-distortion plane \cite{EffChoGra94}. If $P$ carries a subscript, $P = P_\theta$, then we write $D_\theta(\cdot)$, $R_\theta(\cdot)$ and $L_\theta(\cdot)$.

\subsection{Vapnik--Chervonenkis classes}
\label{ssec:vc}

In this paper, we make heavy use of Vapnik--Chervonenkis theory (see Devroye, Gy\"orfi and Lugosi \cite{DevGyoLug96}, Vapnik \cite{Vap98}, Devroye and Lugosi \cite{DevLug01} or Vidyasagar \cite{Vid03} for detailed treatments). This section contains a brief summary of the needed concepts and results. Let $(\cZ,\cA)$ be a measurable space. For any collection $\cC
\subseteq \cA$ of measurable subsets of $\cZ$ and any $n$-tuple $z^n \in \cZ^n$, define the set $\cC(z^n)
\subseteq \{0,1\}^n$ consisting of all distinct binary strings of the
form $(1_{\{z_1 \in A\}},\ldots,1_{\{z_n \in A\}})$, $A \in \cC$. Then
$$
\sS_n(\cC) \deq \max_{z^n \in \cZ^n} |\cC(z^n)|
$$
is called the {\em $n$th shatter coefficient} of $\cC$. The {\em
  Vapnik--Chervonenkis dimension} (or VC-dimension) of $\cC$, denoted
by $\sV(\cC)$, is
defined as the largest $n$ for which $\sS_n(\cC) = 2^n$ (if $\sS_n(\cC)
= 2^n$ for all $n=1,2,\ldots$, then we set $\sV(\cC) = \infty$). If
$\sV(\cC) < \infty$, then $\cC$ is called a {\em Vapnik--Chervonenkis
  class} (or VC class). If $\cC$ is a VC class with $\sV(\cC) \ge 2$,
then it follows from the results of Vapnik and Chervonenkis
\cite{VC71} and Sauer \cite{Sau72} that $\sS_n(\cC) \le n^{\sV(\cC)}$.

For a VC class $\cC$, the so-called {\em Vapnik--Chervonenkis inequalities} (see Lemma~\ref{lm:vc} below) relate its VC dimension $\sV(\cC)$ to maximal deviations of the probabilities
of the events in $\cC$ from
their relative frequencies with respect to an i.i.d. sample of size $n$. For any $z^n \in \cZ^n$, let
$$
P_{z^n} = \frac{1}{n}\sum^n_{i=1}\delta_{z_i}
$$
denote the induced empirical distribution, where
$\delta_{z_i}$ is the Dirac measure (point mass) concentrated at
$z_i$. We then have the following:

\begin{lemma}[Vapnik--Chervonenkis inequalities]
\label{lm:vc}

Let $P$ be a probability measure on $(\cZ,\cA)$, and $Z^n_1 =
(Z_1,\ldots,Z_n)$ an $n$-tuple of independent random variables with
$Z_i \sim P$, $1 \le i \le n$. Let $\cC$ be a Vapnik--Chervonenkis
class with $\sV(\cC) \ge 2$. Then for every $\delta > 0$,
\begin{equation}
\Pr\left\{ \sup_{A \in \cC} |P_{Z^n_1}(A) - P(A)| > \delta \right\} \le
8n^{\sV(\cC)}e^{-n\delta^2/32}
\label{eq:vcbound_1}
\end{equation}
and
\begin{equation}
\E\left\{ \sup_{A \in \cC} |P_{Z^n_1}(A) - P(A)| \right\} \le
c\sqrt{\frac{\sV(\cC)\log n}{n}},
\label{eq:vcbound_2}
\end{equation}
where $c > 0$ is a universal constant. The probabilities and
expectations are with respect to the product measure $P^n$ on $(\cZ^n,\cA^n)$.
\end{lemma}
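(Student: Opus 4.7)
The plan is to prove both inequalities through the classical symmetrization argument of Vapnik and Chervonenkis, which reduces the uniform deviation bound over the (possibly uncountable) class $\cC$ to a combinatorial counting problem on the joint empirical sample of size $2n$, followed by Hoeffding's inequality applied to the resulting Rademacher averages.

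First I introduce a ``ghost'' sample $\td{Z}^n_1 = (\td{Z}_1,\ldots,\td{Z}_n)$ drawn i.i.d. from $P$ independently of $Z^n_1$, with empirical measure $P_{\td{Z}^n_1}$. Chebyshev's inequality applied to $P_{\td{Z}^n_1}(A)-P(A)$ conditionally on $Z^n_1$ gives the symmetrization inequality
$$
\Pr\left\{\sup_{A \in \cC}|P_{Z^n_1}(A) - P(A)| > \delta\right\} \le 2 \Pr\left\{\sup_{A \in \cC}|P_{Z^n_1}(A) - P_{\td{Z}^n_1}(A)| > \delta/2\right\},
$$
valid once $n\delta^2 \ge 2$ (for smaller $\delta$ the bound~\eqref{eq:vcbound_1} is vacuous after choosing constants appropriately). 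Next I introduce i.i.d. Rademacher signs $\sigma_1,\ldots,\sigma_n$ independent of everything else. Since $(Z_i,\td{Z}_i)_{i=1}^{n}$ is exchangeable, swapping $Z_i$ with $\td{Z}_i$ whenever $\sigma_i = -1$ leaves the joint distribution invariant, so the right-hand side equals
$$
2\Pr\left\{\sup_{A \in \cC}\left|\frac{1}{n}\sum_{i=1}^n \sigma_i\big(1_{\{Z_i \in A\}} - 1_{\{\td{Z}_i \in A\}}\big)\right| > \delta/2\right\}.
$$

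Now I condition on the combined sample $(Z^n_1,\td{Z}^n_1)$. As $A$ ranges over $\cC$, the indicator vector $(1_{\{Z_i \in A\}},1_{\{\td{Z}_i \in A\}})_{i=1}^{n}$ takes at most $\sS_{2n}(\cC) \le (2n)^{\sV(\cC)}$ distinct values by the Sauer--Shelah bound quoted before the lemma, so the supremum over $\cC$ reduces to a maximum of at most $(2n)^{\sV(\cC)}$ Rademacher averages of the form $n^{-1}\sum_{i=1}^n \sigma_i c_i$ with deterministic coefficients $c_i \in \{-1,0,1\}$. A single such average has a Gaussian tail by Hoeffding's inequality, and a union bound over the $(2n)^{\sV(\cC)}$ patterns, combined with the factor $2$ from symmetrization, yields a bound of the form $C_1 n^{\sV(\cC)} e^{-C_2 n\delta^2}$; tracking the constants so as to absorb the factor $2^{\sV(\cC)}$ and the vacuous small-$\delta$ regime produces exactly $8n^{\sV(\cC)}e^{-n\delta^2/32}$.

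The expectation bound~\eqref{eq:vcbound_2} then follows by integrating the tail: writing $X = \sup_{A \in \cC}|P_{Z^n_1}(A) - P(A)|$, for any $\delta_0 > 0$ one has
$$
\E X \le \delta_0 + \int_{\delta_0}^\infty 8 n^{\sV(\cC)}e^{-n\delta^2/32}\, d\delta,
$$
and choosing $\delta_0$ of order $\sqrt{\sV(\cC)\log n/n}$ to balance the two terms gives $\E X = O(\sqrt{\sV(\cC)\log n/n})$ with an absolute constant $c$. The main obstacle is really only bookkeeping: the symmetrization step needs $n\delta^2$ sufficiently large to control the conditional variance of $P_{\td{Z}^n_1}(A)$, and the constants from Hoeffding, Sauer--Shelah, and the symmetrization factor must be combined carefully to match the stated form; the result itself is the standard VC inequality (see, e.g., Devroye, Gy\"orfi and Lugosi, Chapter~12).
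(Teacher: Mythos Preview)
The paper does not actually supply a proof of this lemma: it is stated as a known result, with the surrounding text pointing to standard references on VC theory (Devroye--Gy\"orfi--Lugosi, Vapnik, Devroye--Lugosi, Vidyasagar) for details. Your sketch is precisely the classical symmetrization/Rademacher/Sauer--Shelah/Hoeffding argument that those references contain, and it is correct; the only caveat is that tracking the constants exactly to land on $8$ and $1/32$ requires a bit of care (and different textbooks report slightly different constants), but since the paper uses the lemma as a black box this is immaterial.
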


\begin{remark} A more refined technique involving metric
entropies and empirical covering numbers, due to Dudley \cite{Dud78}, can yield a much better
$O(1/\sqrt{n})$ bound on the expected maximal deviation between the
true and the empirical probabilities. This improvement, however, comes at
the expense of a much larger constant hidden in the $O(\cdot)$
notation.\end{remark}

Finally, we shall need the following lemma, which is a simple corollary of the results of Karpinski and Macintyre \cite{KarMac97} (see also Section~10.3.5 of Vidyasagar \cite{Vid03}):

\begin{lemma} \label{lm:karpinski_macintyre}
Let $\cC = \{A_\xi : \xi \in \R^k\}$ be a collection of measurable subsets of $\R^d$, such that
$$
A_\xi = \{ z \in \R^d : \Pi(z,\xi) > 0 \}, \qquad \xi \in \R^k
$$
where for each $z \in \R^d$, $\Pi(z,\cdot)$ is a polynomial of degree $s$ in the components of $\xi$. Then $\cC$ is a VC class with $\sV(\cC) \le 2k\log(4es)$.
\end{lemma}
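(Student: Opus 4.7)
The plan is to rewrite $1_{\{z_i \in A_\xi\}}$ as the sign of a polynomial in the parameter $\xi \in \R^k$ and feed the resulting ``dual'' picture into the cell-count machinery from semialgebraic geometry that Karpinski and Macintyre~\cite{KarMac97} refine. This is exactly why the lemma is advertised as ``a simple corollary.''

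First, I fix an arbitrary $n$-tuple $z_1, \ldots, z_n \in \R^d$ and, for each $i$, set $P_i(\xi) \deq \Pi(z_i, \xi)$; by hypothesis each $P_i$ is a polynomial in $\xi \in \R^k$ of degree at most $s$. Since $1_{\{z_i \in A_\xi\}} = 1_{\{P_i(\xi) > 0\}}$, the binary strings making up $\cC(z^n)$ are exactly the distinct positivity patterns $(1_{\{P_1(\xi) > 0\}}, \ldots, 1_{\{P_n(\xi) > 0\}})$ realized as $\xi$ ranges over $\R^k$. Thus $\sS_n(\cC)$ is bounded above by the number of realizable sign patterns of the $n$-tuple $(P_1, \ldots, P_n)$, taken for the worst choice of $z_1, \ldots, z_n$.

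Next, I would quote the Karpinski--Macintyre cell-count bound (in this purely polynomial setting it is Warren's theorem): the number of realizable sign patterns of $n$ polynomials of degree at most $s$ in $k$ real variables is at most $(2esn/k)^k$ whenever $n \ge k$. Combined with the previous step, this gives $\sS_n(\cC) \le (2esn/k)^k$ for $n \ge k$ (for $n<k$ the trivial bound $2^n$ is harmless since we will conclude $\sV(\cC)$ is of order $k\log s$).

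Finally, the shattering requirement $\sS_n(\cC) = 2^n$ forces $n \le k\log(2esn/k)$. A short calculation, using only the inequality $\log(4es) > \log\log(4es)$ (valid for all $s \ge 1$), shows that this bound fails as soon as $n > 2k\log(4es)$, whence $\sV(\cC) \le 2k\log(4es)$ by the usual hereditary property of shattering. The only step requiring genuine care is matching the precise constant in the Karpinski--Macintyre cell-count inequality; everything else is an elementary rewriting followed by a few lines of algebra.
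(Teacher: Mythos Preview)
Your proposal is correct and is precisely the ``simple corollary'' the paper alludes to: the paper does not give a proof but merely cites Karpinski--Macintyre \cite{KarMac97} (and Vidyasagar \cite{Vid03}), and your dualization-plus-Warren/KM-cell-count argument is exactly how that corollary is obtained. The only cosmetic remark is that the paper's $\log$ is base~2, so your final inequality $\log(4es) > \log\log(4es)$ should be read accordingly---it still holds for all $s \ge 1$ since $4es \ge 4e > \log_2(4e)$.
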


\section{Statement of results}
\label{sec:results}

In this section we state our result concerning universal schemes for joint lossy compression and identification of stationary sources under certain regularity conditions. We work in the usual setting of universal source coding: we are given a source $\bd{X} = \{X_i\}_{i \in \Z}$ whose process distribution is known to be a member of some parametric class $\{P_\theta : \theta \in \Lambda\}$. The parameter space $\Lambda$ is an open subset of the Euclidean space $\R^k$ for some finite $k$, and we assume that $\Lambda$ has nonempty interior. We wish to design a sequence of variable-rate vector quantizers, such that the decoder can reliably reconstruct the original source sequence $\bd{X}$ and reliably identify the active source in an asymptotically optimal manner for all $\theta \in \Lambda$. We begin by listing the regularity conditions.

{\em Condition 1.} The sources in $\{P_\theta : \theta \in
\Lambda\}$ are {\em algebraically $\beta$-mixing}: there exists a constant $r > 0$, such that
$$
\beta_\theta(k) = O(k^{-r}), \qquad \forall \theta \in \Lambda
$$
where the constant implicit in the $O(\cdot)$ notation may depend on $\theta$.

This condition ensures that certain finite-block functions of the source $\bd{X}$ can be approximated in distribution by i.i.d. processes, so that we can invoke the Vapnik--Chervonenkis machinery of Section~\ref{ssec:vc}. This ``blocking" technique, which we exploit in the proof of our Theorem~\ref{thm:main}, dates back to Bernstein \cite{Ber27}, and was used by Yu \cite{Yu94} to derive rates of convergence in the uniform laws of large numbers for stationary mixing processes, and by Meir \cite{Mei00} in the context of nonparametric adaptive prediction of stationary time series. As an example of when an even stronger decay condition holds, let
$\bd{X} = \{X_i\}_{i \in \Z}$ be a finite-order autoregressive moving-average
  (ARMA) process driven by a zero-mean i.i.d. process $\bd{Y} = \{Y_i\}$,
i.e., there exist poisitive integers $p,q$ and $p+q+1$ real constants
$a_0,a_1\ldots,a_p,b_1,\ldots,b_q$ such that
$$
\sum^p_{i=0} a_i X_{n-i} = \sum^q_{j=1} b_j Y_{n-j}, \qquad n \in \Z.
$$
Mokkadem \cite{Mok88} has shown that, provided the common distribution of the $Y_i$ is absolutely continuous and the roots of the
polynomial $A(z) = \sum^p_{i=0} a_i z^i$ lie outside the unit
circle in the complex plane, the $\beta$-mixing coefficients of $\bd{X}$ decay to zero exponentially.

{\em Condition 2.} For each $\theta \in \Lambda$, there exist
constants $\delta_\theta > 0$ and $c_\theta > 0$, such that
$$
\sup_n  \frac{d_n(\theta,\theta')}{\sqrt{n}} \le c_\theta \|\theta - \theta'\|
$$
for all $\theta'$ in the open ball of radius $\delta_\theta$ centered at $\theta$, where $\|\cdot\|$ is the Euclidean norm on $\Lambda$.

This condition guarantees that, for any sequence $\{\delta_n\}_{n \in \N}$ of positive reals such that
$$
\delta_n \to 0, \sqrt{n}\delta_n \to 0, \qquad \mbox{as } n \to \infty
$$
and any sequence $\{\theta_n\}_{n \in \N}$ in $\Lambda$ satisfying $\| \theta_n - \theta \| < \delta_n$ for a given $\theta \in \Lambda$, we have
$$
d_n(\theta,\theta_n) \to 0, \qquad \mbox{as } n \to \infty.
$$
It is weaker (i.e., more general) than the conditions of Rissanen
\cite{Ris84,Ris96} which
control the behavior of the relative entropy (information divergence) as a function of the source parameters in terms of the Fisher information and related quantities. Indeed, for each $n$ let
\begin{eqnarray*}
D_n(P_\theta \| P_{\theta'}) &=& \frac{1}{n} \E_\theta \left\{ \ln \frac{dP_\theta}{dP_{\theta'}}(X^n_1)\right\} \\
&\equiv& \frac{1}{n} \int_{\cX^n} p_\theta(x^n) \ln \frac{p_\theta(x^n)}{p_{\theta'}(x^n)} d\mu(x^n)
\end{eqnarray*}
be the normalized $n$th-order relative entropy (information
divergence) between $P_\theta$ and $P_{\theta'}$. Suppose that, for
each $\theta$,
$D_n(P_\theta \| P_{\theta'})$ is twice continuously differentiable as
a function of $\theta'$. Let $\theta'$ lie in an open ball of radius
$\delta$ around $\theta$. Since $D(P_\theta \|
P_{\theta'})$ attains its minimum at $\theta' = \theta$, the gradient $\nabla_{\theta'} D_n(P_\theta \| P_{\theta'})$ evaluated at $\theta' = \theta$ is zero, and we can write the second-order Taylor expansion of $D_n$ about $\theta$ as
\begin{equation}
D_n(P_\theta \| P_{\theta'}) = \frac{1}{2}(\theta - \theta')^T
J_n(\theta)(\theta - \theta') + o(\| \theta - \theta' \|^2),
\label{eq:taylor}
\end{equation}
where the Hessian matrix
$$
J_n(\theta) = \left[ \frac{\partial^2}{\partial \theta'_i \partial \theta'_j} D_n(P_\theta\|P_{\theta'}) \Big|_{\theta' = \theta} \right]_{i,j=1,\ldots,k},
$$
under additional regularity conditions, is equal to the Fisher information matrix
$$
I_n(\theta) = \left[ -\frac{1}{n}\E_\theta \left\{ \frac{\partial^2}{\partial \theta_i \partial \theta_j} \ln p_\theta(X^n_1) \right\}\right]_{i,j = 1,\ldots,k}
$$
(see Clarke and Barron \cite{ClaBar90}). Assume now, following Rissanen \cite{Ris84,Ris96}, that the sequence of matrix
norms $\{\|I_n(\theta)\|\}$ is bounded (by a constant
depending on $\theta$). Then we can write
\begin{eqnarray*}
D_n(P_\theta \| P_{\theta'}) &\le& \frac{1}{2} (\| I_n(\theta) \|+ o(1)) \cdot \| \theta - \theta'
  \|^2 \\
  &\le& c'_\theta \| \theta - \theta'\|^2,
\end{eqnarray*}
i.e., the normalized relative entropies $D_n(P_\theta \| P_{\theta'})$
are locally quadratic in $\theta'$. Then Pinsker's inequality (see, e.g., Lemma~5.2.8 of Gray
\cite{Gra90a}) implies that $\sqrt{2 D_n(P_\theta \| P_{\theta'})} \ge d_n(\theta,\theta')/\sqrt{n}$, and we recover our
Condition 2. Rissanen's condition, while stronger than our Condition 2, is easier to check, the fact which we exploit in our discussion of examples of Section~\ref{sec:examples}.

{\em Condition 3.} 
For each $n$, let $\cA_n$ be the collection of all sets of the form
$$
A_{\theta,\theta'} = \big\{ x^n \in \cX^n: p_\theta(x^n) >
p_{\theta'}(x^n) \big\}, \qquad \theta \neq \theta'.
$$
Then we require that, for each $n$, $\cA_n$ is a VC class, and
$\sV(\cA_n) = o(n/\log n)$.

This condition is satisfied, for example, when $\sV(\cA_n) = V < \infty$ independently of $n$, or when $\sV(\cA_n) = \log n$. The use of the class $\cA_n$ dates back to the work of Yatracos \cite{Yat85} on minimum-distance density estimation. The ideas of Yatracos were further developed by Devroye and Lugosi \cite{DevLug96,DevLug97}, who dubbed $\cA_n$ the {\em Yatracos class} (associated with the densities $p^n_\theta$). We shall adhere to this terminology. To give an intuitive interpretation to $\cA_n$, let us consider a pair $\theta,\theta' \in \Lambda$ of distinct parameter vectors and note that the set $\{x^n : p_\theta(x^n) > p_{\theta'}(x^n)\}$ consists of all $x^n$ for which the simple hypothesis test
\begin{equation}
H_0 : X^n_1 \sim P^n_\theta \quad \mbox{versus} \quad H_1 : X^n_1 \sim
P^n_{\theta'}
\label{eq:hyptest}
\end{equation}
is passed by the null hypothesis $H_0$ under the likelihood-ratio decision rule. Now, suppose that $Z_1,\ldots,Z_m$ are drawn {\em independently} from $P^n_\theta$. To each $A \in \cA_n$ we can associate a {\em classifier} $\map{\kappa_A}{\cX^n}{\{0,1\}}$ defined by $\kappa_A(x^n) \deq 1_{\{x^n \in A\}}$. Call two sets $A,A' \in \cA_n$ {\em equivalent} with respect to the sample $Z^n_1 = (Z_1,\ldots,Z_m)$, and write $A \sim_{Z^n_1} A'$, if their associated classifiers yield identical classification patterns:
$$
\big(\kappa_A(Z_1),\ldots,\kappa_A(Z_m)\big) = \big(\kappa_{A'}(Z_1),\ldots,\kappa_{A'}(Z_m)\big).
$$
It is easy to see that $\sim_{Z^n_1}$ is an equivalence relation. From the definitions of the shatter coefficients $\sS_m(\cA_n)$ and the VC dimension $\sV(\cA_n)$ (cf.~Section~\ref{ssec:vc}), we see that the cardinality of the quotient set $\cA_n/\sim_{Z^n_1}$ is equal to $2^m$ for all sample sizes $m \le \sV(\cA_n)$, whereas for $m > \sV(\cA_n)$, it is bounded from above by $m^{\sV(\cA_n)}$, which is strictly less than $2^m$. Thus, the fact that the Yatracos class $\cA_n$ has finite VC dimension implies that the problem of estimating the density $p^n_\theta$ from a large i.i.d. sample reduces, in a sense, to a finite number (in fact, polynomial in the sample size $m$, for $m > \sV(\cA_n)$) of simple hypothesis tests of the type (\ref{eq:hyptest}). Our Condition 1 will then allow us to transfer this intuition to (weakly) dependent samples.

Now that we have listed the regularity conditions that must hold for the sources in $\{P_\theta : \theta \in \Lambda\}$, we can state our main result.

\begin{theorem} \label{thm:main}
Let $\{ P_\theta : \theta \in \Lambda \}$ be a parametric class of sources satisfying Conditions 1--3. Then for every $\lambda >0$ and every $\eta > 0$, there exists a sequence $\{C^{n,m_n}_*\}_{n \in \N}$ of variable-rate vector quantizers with memory length $m_n \le n(n + n^{(2+\eta)/r} + 1)$ and effective memory length $n^2$, such that, for all $\theta \in \Lambda$,
\begin{eqnarray}
&& L_\theta(C^{n,m_n}_*,\lambda) - \inf_{m \ge 0} \inf_{C^{n,m}}
L_\theta(C^{n,m},\lambda) \nonumber\\
&& \qquad \qquad \qquad \qquad = O\left(\sqrt{\frac{\sV(\cA_n)\log n}{n}}\right),\label{eq:universality}
\end{eqnarray}
where the constants implicit in the $O(\cdot)$ notation depend on $\theta$. Furthermore, for each $n$, the binary description produced by the encoder is such that the decoder can identify the $n$-dimensional marginal of the active source up to a variational ball of radius $O\big(\sqrt{\sV(\cA_n)\log n/n}\big)$ with probability one.\\
\end{theorem}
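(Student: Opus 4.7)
The plan is to build $C^{n,m_n}_*$ as a two-stage block code. On each block of length $n$, the encoder uses the $m_n$ immediately preceding samples as a training segment to form an estimate $\wh\theta$ of the active parameter, quantizes $\wh\theta$ to a point $\td\theta$ on a fine grid, and transmits a prefix-free description of $\td\theta$ followed by the output of a Lagrange-optimal variable-rate $n$-block quantizer $C_{\td\theta}$ designed for $P_{\td\theta}$ applied to the current block. The decoder parses off $\td\theta$---this is what delivers the identification claim---and then uses the matched codebook to reproduce the block. The Lagrangian redundancy splits as
\begin{equation*}
L_\theta(C^{n,m_n}_*,\lambda) - \inf_{m\ge 0}\inf_{C^{n,m}} L_\theta(C^{n,m},\lambda) \le \bigl|L_\theta(C_{\td\theta}) - L_{\td\theta}(C_{\td\theta})\bigr| + \lambda\cdot O(\log n)/n + \text{(Appendix gap)},
\end{equation*}
where the final term bounds how close a Lagrange-optimal $n$-block quantizer designed for $P_{\td\theta}$ comes to the infimum at block length $n$ under $P_\theta$.

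The heart of the matter is $\wh\theta$, which I would take to be a Yatracos-style minimum-distance estimator. Partition the memory segment into $N = n$ disjoint sub-blocks of length $n$, separated by gaps of length $k_n \deq n^{(2+\eta)/r}$; this gives total memory $n^2 + n\cdot n^{(2+\eta)/r} + n \le n(n+n^{(2+\eta)/r}+1)$ and effective memory exactly $n^2$, as prescribed. Let $\wh P$ be the empirical distribution on $\cX^n$ formed from these sub-blocks, and set
\begin{equation*}
\wh\theta \in \argmin_{\theta' \in \Lambda} \sup_{A \in \cA_n} \bigl|P^n_{\theta'}(A) - \wh P(A)\bigr|.
\end{equation*}
A standard Yatracos inequality (Devroye and Lugosi \cite{DevLug96,DevLug97}) gives $d_n(\theta,\wh\theta) \le 3\sup_{A\in\cA_n}|P^n_\theta(A) - \wh P(A)|$. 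The sub-blocks are dependent, but Condition~1 and Berbee's coupling lemma couple them to $n$ truly i.i.d. copies of $P^n_\theta$ with failure probability at most $(n-1)\beta_\theta(k_n) = O(n^{-1-\eta})$, which is summable in $n$. On the coupled i.i.d. sample, Lemma~\ref{lm:vc} applied to the Yatracos class yields $\sup_{A\in\cA_n}|P^n_\theta(A) - \wh P(A)| = O\bigl(\sqrt{\sV(\cA_n)\log n/n}\bigr)$; the exponential bound (\ref{eq:vcbound_1}) together with Borel--Cantelli upgrades this to eventually with probability one, and the summable coupling cost transfers the same bound to the true $\wh P$.

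I would then quantize $\wh\theta$ to the nearest grid point $\td\theta$ with spacing $\epsilon_n \asymp \sqrt{\sV(\cA_n)\log n}/n$, using an Elias-type code on each coordinate to handle the unboundedness of $\Lambda$; the total description costs $O(k\log n) = O(\log n)$ bits, i.e., $O(\log n)/n$ Lagrangian rate, which is $o\bigl(\sqrt{\sV(\cA_n)\log n/n}\bigr)$ under Condition~3. By Condition~2, $\|\wh\theta-\td\theta\|\le\epsilon_n$ implies $d_n(\wh\theta,\td\theta) \le c_\theta\sqrt{n}\,\epsilon_n = O\bigl(\sqrt{\sV(\cA_n)\log n/n}\bigr)$, and the triangle inequality preserves the target rate for $d_n(\theta,\td\theta)$. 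The distortion part of the Lagrangian mismatch is at most $\rho_{\max}\,d_n(\theta,\td\theta)$ by the bounded-function version of the variational-distance inequality; the rate part is controlled using the fact (from the Appendix) that the codeword lengths of a Lagrange-optimal quantizer admit uniform moment bounds, so that $|\E_\theta\ell_n - \E_{\td\theta}\ell_n| = O(d_n(\theta,\td\theta))$ as well. Identification with probability one then follows because the decoded $\td\theta$ lands in a variational ball of the stated radius around $\theta$ eventually a.s.

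The main obstacle is twofold. First, the rate half of the Lagrangian mismatch is delicate because $\ell_n$ is a priori unbounded, so one cannot simply invoke the $[0,1]$-valued dual representation of variational distance---this is exactly what the Lagrange-optimal quantizer machinery of the Appendix is designed to supply. Second, the blocking parameters $N=n$ and $k_n = n^{(2+\eta)/r}$ must be arranged so that the coupling cost is \emph{genuinely} summable in $n$ (to convert in-expectation concentration into almost-sure identification), while simultaneously respecting the prescribed memory budget and the VC growth constraint $\sV(\cA_n) = o(n/\log n)$.
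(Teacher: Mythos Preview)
Your overall architecture---Yatracos minimum-distance estimation on $n$ sub-blocks of length $n$ separated by gaps of length $n^{(2+\eta)/r}$, followed by a two-stage code---matches the paper's, and your handling of the $\beta$-mixing via coupling and of the Lagrangian mismatch via the Appendix lemmas is sound. But there is a genuine gap in the first-stage encoding.

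You propose to quantize $\wh\theta$ to a nearby Euclidean grid point $\td\theta$ and then invoke Condition~2 to obtain $d_n(\wh\theta,\td\theta)\le c_\theta\sqrt{n}\,\epsilon_n$. This is not what Condition~2 delivers: it asserts that \emph{for each fixed $\theta$} there exist $c_\theta,\delta_\theta$ such that $d_n(\theta,\theta')\le c_\theta\sqrt{n}\,\|\theta-\theta'\|$ whenever $\|\theta-\theta'\|<\delta_\theta$. To bound $d_n(\wh\theta,\td\theta)$ you would need Condition~2 at the \emph{random} point $\wh\theta$, with constants $c_{\wh\theta},\delta_{\wh\theta}$ over which you have no uniform control. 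Nor can you fall back on Condition~2 at the true $\theta_0$: the Yatracos argument gives $d_n(\theta_0,\wh\theta)$ small, not $\|\theta_0-\wh\theta\|$ small (no identifiability is assumed), so there is no guarantee that the grid point $\td\theta$ lies in the Euclidean $\delta_{\theta_0}$-ball around $\theta_0$. A related issue: Elias-encoding the grid index of $\td\theta$ costs roughly $\log(\|\wh\theta\|/\epsilon_n)$ bits per coordinate, and $\|\wh\theta\|$ is not controlled either.

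The paper resolves this by replacing the deterministic Euclidean grid with a \emph{random} first-stage codebook $\{\Theta(i)\}_{i\ge1}$ drawn i.i.d.\ from a distribution $W$ with everywhere positive continuous density on $\Lambda$, and transmitting the Elias-encoded waiting time $T_n=\inf\{i:d_n(\Theta(i),\wh\theta)\le\sqrt{n}\delta_n\}$. Because the search criterion is stated directly in terms of $d_n$, variational closeness of $\wh\theta$ and the transmitted $\wh\theta'=\Theta(T_n)$ is built in by construction. To show $\E\log T_n=O(\log n)$ one only needs a lower bound on $W\{\theta':d_n(\theta',\wh\theta)\le\sqrt{n}\delta_n\}$; on the high-probability event $\{d_n(\theta_0,\wh\theta)\le\sqrt{n}\delta_n/2\}$ this set contains, by the triangle inequality, $\{\theta':d_n(\theta',\theta_0)\le\sqrt{n}\delta_n/2\}$, which in turn contains the Euclidean ball $\{\theta':\|\theta'-\theta_0\|\le\delta_n/(2c_{\theta_0})\}$ by Condition~2 applied \emph{only at the fixed true parameter $\theta_0$}. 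That is precisely the device your deterministic grid is missing.
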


What (\ref{eq:universality}) says is that, for each block length $n$ and each $\theta \in \Lambda$, the code $C^{n,m_n}_*$, which is {\em independent of $\theta$}, performs almost as well as the best finite-memory quantizer with block length $n$ that can be designed with full {\em a priori} knowledge of the $n$-dimensional marginal $P^n_\theta$. Thus, as far as compression goes, our scheme can compete with all finite-memory variable-rate lossy block codes (vector quantizers), with the additional bonus of allowing the decoder to identify the active source in an asymptotically optimal manner.

It is not hard to see that the double infimum in (\ref{eq:universality}) is achieved already in the zero-memory case, $m = 0$. Indeed, it is immediate that having nonzero memory can only improve the Lagrangian performance, i.e.,
$$
\inf_{m \ge 0} \inf_{C^{n,m}} L_\theta(C^{n,m},\lambda) \le \inf_{C^n} L_\theta(C^n,\lambda), \qquad \forall \theta \in \Lambda.
$$
On the other hand, given any code $C^{n,m} = (f,\varphi)$, we can construct a zero-memory code $C^n_0 = (f_0,\varphi_0)$, such that $L_\theta(C^n_0,\lambda) \le L_\theta(C^{n,m},\lambda)$ for all $\theta \in \Lambda$. To see this, define for each $x^n \in \cX^n$ the set
$$
\cS(x^n) \deq \{s \in \{0,1\}^* : s = f(x^n,z^m) \mbox{ for some } z^m \in \cX^m \},$$
and let
$$
f_0(x^n) = \argmin_{s \in \cS(x^n)} \big( \rho_n(x^n,\varphi(s)) + \lambda \ell(s) \big), \qquad \forall x^n \in \cX^n
$$
and $\varphi_0 \equiv \varphi$. Then, given any $(x^n,z^m) \in \cX^n\times \cX^m$, let $s = f(x^n,z^m)$. We then have
\begin{eqnarray*}
&& \rho_n(x^n,C^n_0(x^n)) + \lambda \ell(f_0(x^n)) \\
&& \qquad = \rho_n(x^n,\varphi(f_0(x^n)) + \lambda\ell(f_0(x^n)) \\
&& \qquad \le \rho_n(x^n,\varphi(s)) + \ell(s) \\
&& \qquad = \rho_n(x^n,f(x^n,z^m)) + \ell(f(x^n,z^m)).
\end{eqnarray*}
Taking expectations, we see that $L_\theta(C^n_0,\lambda) \le L_\theta(C^{n,m},\lambda)$ for all $\theta \in \Lambda$, which proves that
$$
\inf_{C^n} L_\theta(C^n,\lambda) \le \inf_{m \ge 0} \inf_{C^{n,m}} L_\theta(C^{n,m},\lambda), \qquad \forall \theta \in \Lambda.
$$
The infimum of $L_\theta(C^n,\lambda)$ over all zero-memory variable-rate quantizers $C^n$ with block length $n$ is the {\em operational $n$th-order distortion-rate Lagrangian} $\wh{L}^n_\theta(\lambda)$ \cite{EffChoGra94}. Because each $P_\theta$ is ergodic, $\wh{L}^n_\theta(\lambda)$ converges to the {\em distortion-rate Lagrangian}
$$
L_\theta(\lambda) \deq \min_R \big( D_\theta(R) + \lambda R \big),
$$
where $D_\theta(R)$ is the Shannon distortion-rate function of $P_\theta$ (see Lemma~2 in the Appendix to Chou, Effros and Gray \cite{ChoEffGra96}). Thus, our scheme is universal not only in the $n$th-order sense of (\ref{eq:universality}), but also in the distortion-rate sense, i.e.,
$$
L_\theta(C^{n,m_n}_*,\lambda) - L_\theta(\lambda) \to 0, \qquad \mbox{as } n \to \infty
$$
for every $\theta \in \Lambda$. Thus, in the terminology of \cite{ChoEffGra96}, our scheme is {\em weakly minimax universal} for $\{P_\theta : \theta \in \Lambda\}$.

\section{Proof of Theorem~\ref{thm:main}}
\label{sec:proof}

\subsection{The main idea}
\label{ssec:main}

In this section, we describe the main idea behind the proof and fix some notation. We have already seen that it suffices to construct a universal scheme that can compete with all {\em zero-memory} variable-rate quantizers. That is, it suffices to show that there exists a sequence $\{C^{n,m_n}_*\}$ of codes, such that
\begin{equation} \label{eq:zeromem_universality}
L_\theta(C^{n,m_n}_*,\lambda) - \wh{L}^n_\theta(\lambda) = O\left(\sqrt{\frac{\sV(\cA_n)\log n}{n}}\right),  \forall \theta \in \Lambda.
\end{equation}
This is what we shall prove.

We assume throughout that the ``true" source is $P_{\theta_0}$ for some $\theta_0 \in \Lambda$. Our code operates as follows. Suppose that:
\begin{itemize}
\item Both the encoder and the decoder have access to a countably infinite ``database" $\bd{c} = \{\theta(i)\}_{i \in \N}$, where each $\theta(i) \in \Lambda$. Using Elias' universal representation of the integers \cite{Eli75}, we can associate to each $\theta(i)$ a unique binary string $s(i)$ with $\ell(s(i)) = \log i + O(\log \log i)$ bits.
\item A sequence $\{ \delta_n \}$ of positive reals is given, such that
$$
\delta_n \to 0, \sqrt{n}\delta_n \to 0, \qquad \mbox{as } n \to \infty
$$
(we shall specify the sequence $\{\delta_n\}$ later in the proof).
\item For each $n \in \N$ and each $\theta \in \Lambda$, there exists a zero-memory $n$-block code $C^n_\theta = (f^n_\theta,\varphi^n_\theta)$ that achieves (or comes arbitrarily close to) the $n$th-order Lagrangian optimum for $P_\theta$: $L_\theta(C^n_\theta,\lambda) = \wh{L}^n_\theta(\lambda)$.
\end{itemize}
Fix the block length $n$. Because the source is stationary, it suffices to describe the mapping of $X^n_1$ into $\hX^n_1$. The encoding is done as follows:
\begin{enumerate}
\item The encoder estimates $P^n_{\theta_0}$ from the $m_n$-block $X^0_{-m_n+1}$ as $P^n_{\td{\theta}}$, where $\td{\theta} = \td{\theta}(X^0_{-m_n+1})$.
\item The encoder then computes the {\em waiting time}
$$
T_n \deq \inf \left\{ i \ge 1 : d_n\big(\theta(i),\td{\theta}(X^0_{-m_n+1})\big) \le \sqrt{n}\delta_n \right\},
$$
with the standard convention that the infimum of the empty set is equal to $+\infty$. That is, the encoder looks through the database $\bd{c}$ and finds the first $\theta(i)$, such that the $n$-dimensional distribution $P^n_{\theta(i)}$ is in the variational ball of radius $\sqrt{n}\delta_n$ around $P^n_{\td{\theta}}$. 
\item If $T_n < + \infty$, the encoder sets $\wh{\theta} = \theta(i)$; otherwise, the encoder sets $\wh{\theta} = \theta_d$, where $\theta_d \in \Lambda$ is some default parameter vector, say, $\theta(1)$.
\item The binary description of $X^n_1$ is a concatenation of the following three binary strings: (i) a 1-bit flag $b$ to tell whether $T_n$ is finite $(b=0)$ or infinite $(b=1)$; (ii) a binary string $s_1$ which is equal to $s(T_n)$ if $T_n < +\infty$ or to an empty string if $T_n = +\infty$; (iii) $s_2 = f_{\wh{\theta}}(X^n_1)$. The string $\td{s} = bs_1$ is called the {\em first-stage description}, while $s_2$ is called the {\em second-stage description}.
\end{enumerate}
The decoder receives $bs_1s_2$, determines $\wh{\theta}$ from $\td{s}$, and produces the reproduction $\wh{X}^n_1 = \varphi_{\wh{\theta}}(s_2)$. Note that when $b = 0$ (which, as we shall show, will happen eventually almost surely), $P^n_{\wh{\theta}}$ lies in the variational ball of radius $\sqrt{n}\delta_n$ around the estimated source $P^n_{\td{\theta}}$. If the latter is a good estimate of $P^n_{\theta_0}$, i.e., $d_n(\theta_0,\td{\theta}) \to 0$ as $n \to \infty$ a.s., then the estimate of the true source computed by the decoder is only slightly worse. Furthermore, as we shall show, the almost-sure convergence of $d_n(\theta_0,\wh{\theta})$ to zero as $n \to \infty$ implies that the Lagrangian performance of $C^n_{\wh{\theta}}$ on $P_{\theta_0}$ is close to the optimum $L_{\theta_0}(C^n_{\theta_0},\lambda) \equiv \wh{L}^n_{\theta_0}(\lambda)$.

Formally, the code $C^{n,m_n}_*$ is comprised by the following maps:
\begin{itemize}
\item the {\em parameter estimator} $\map{\td{\theta}}{\cX^{m_n}}{\Lambda}$;
\item the {\em parameter encoder} $\map{\td{g}}{\Lambda}{\td{\cS}}$, where $\td{\cS} \equiv \{ 0 s(i) \}_{i \in \N} \cup \{1 \}$;
\item the {\em parameter decoder} $\map{\td{\psi}}{\td{\cS}}{\Lambda}$.
\end{itemize}
Let $\td{f}$ denote the composition $\td{g}\circ\td{\theta}$ of the parameter estimator and the parameter encoder, which we refer to as the {\em first-stage encoder}, and let $\wh{\theta}$ denote the composition $\td{\psi} \circ \td{f}$ of the parameter decoder and the first-stage encoder. The decoder $\td{\psi}$ is the {\em first-stage decoder}. The collection $\{C^n_\theta : \theta \in \Lambda\}$ defines the {\em second-stage codes}. The encoder $\map{f_*}{\cX^n \times \cX^{m_n}}{\td{\cS} \times \cS}$ and the decoder $\map{\varphi_*}{\td{\cS} \times \cS}{\hcX^n}$ of $C^{n,m_n}_*$ are defined as
$$
f_*(X^n_1,X^0_{-m_n+1}) \deq \td{f}(X^0_{-m_n+1})f_{\wh{\theta}(X^0_{-m_n+1})}(X^n_1)
$$
and
$$
\varphi_*(\td{s}s) \deq \varphi_{\td{\psi}(\td{s})}(s), \qquad s \in \cS, \td{s} \in \td{\cS}
$$
respectively. To assess the performance of $C^{n,m_n}_*$, consider the function
\begin{eqnarray*}
&& g(X^n_1,X^0_{-m_n+1}) \deq \rho_n \Big(X^n_1,C^n_{\wh{\theta}(X^0_{-m_n+1})}(X^n_1)\Big) \\
&& \qquad + \lambda \Big[\ell_n \Big(f_{\wh{\theta}(X^0_{-m_n+1})}(X^n_1)\Big)  + \ell_n \Big( \td{f}(X^0_{-m_n+1}) \Big) \Big].
\end{eqnarray*}
The expectation $\E_{\theta_0}\left\{ g(X^n_1,X^0_{-m_n+1}) \right\}$ of $g$ with respect to $P_{\theta_0}$ is precisely the Lagrangian performance of $C^{n,m_n}_*$, at Lagrange multiplier $\lambda$, on the source $P_{\theta_0}$. We consider separately the contributions of the first-stage and the second-stage codes. Define another function $\map{h}{\cX^n \times \cX^{m_n}}{\R^+}$ by
\begin{eqnarray*}
&& h(X^n_1,X^0_{-m_n+1}) \deq \rho_n\Big(X^n_1,C^n_{\wh{\theta}(X^0_{-m_n+1})}(X^n_1)\Big) \\
&& \qquad + \lambda \ell_n\Big(f_{\wh{\theta}(X^0_{-m_n+1})}(X^n_1)\Big),
\end{eqnarray*}
so that $\E_{\theta_0}\Big\{ h(X^n_1,X^0_{-m_n+1}) \Big| X^0_{-m_n+1} \Big\}$ is the (random) Lagrangian performance of the code $C^n_{\wh{\theta}(X^0_{-m_n+1})}$ on $P_{\theta_0}$. Hence,
$$
g(X^n_1,X^0_{-m_n+1}) = h(X^n_1,X^0_{-m_n+1}) + \lambda \ell_n\Big(\td{f}(X^0_{-m+1})\Big),
$$
so, taking expectations, we get
\begin{eqnarray}
&& L_{\theta_0}(C^{n,m_n}_*,\lambda) = \E_{\theta_0}\left\{ h(X^n_1,X^0_{-m_n+1}) \right\} \nonumber\\
&& \qquad + \lambda \E_{\theta_0}\left\{ \ell_n\Big( \td{f}(X^0_{-m_n+1}) \Big) \right\}.
\label{eq:overall_lagrange}
\end{eqnarray}
Our goal is to show that the first term in Eq.~(\ref{eq:overall_lagrange}) converges to the $n$th-order optimum $\wh{L}^n_{\theta_0}(\lambda)$, and that the second term is $o(1)$.

The proof itself is organized as follows. First we motivate the choice of the memory lengths $m_n$ in Section~\ref{ssec:memory}. Then we indicate how to select the database $\bd{C}$ (Section~\ref{ssec:database}) and how to implement the parameter estimator $\td{\theta}$ (Section~\ref{ssec:param_estimation}) and the parameter encoder/decoder pair $(\td{g},\td{\psi})$ (Section~\ref{ssec:1st_stage}). The proof is concluded by estimating the Lagrangian performance of the resulting code (Section~\ref{ssec:performance}) and the fidelity of the source identification at the decoder (Section~\ref{ssec:src_ident}). In the following, (in)equalities involving the relevant random variables are assumed to hold for all realizations and not just a.s., unless specified otherwise.

\begin{figure*}[htb]
\centerline{
\includegraphics[width=0.7\textwidth]{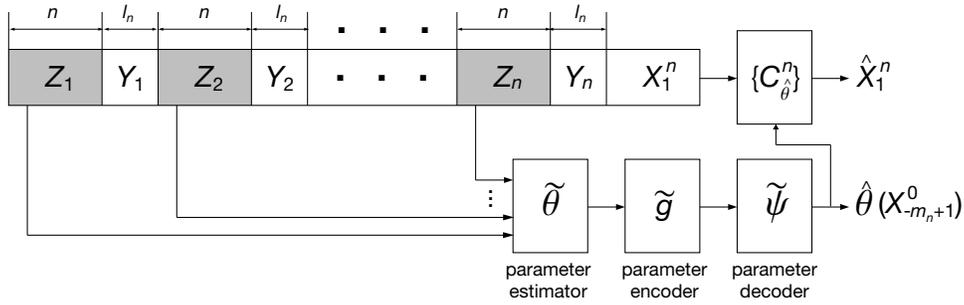}}
\caption{The structure of the code $C^{n,m_n}_*$. The shaded blocks are those used for estimating the source parameters.}
\label{fig:code_structure}
\end{figure*}

\subsection{The memory length}
\label{ssec:memory}

Let $l_n = \lceil n^{(2+\eta)/r} \rceil$, where $r$ is the common decay exponent of the $\beta$-mixing coefficients $\beta_\theta(k)$ in Condition~1, and let $m_n = n(n+l_n)$. We divide the $m_n$-block $X^0_{-m_n+1}$ into $n$ blocks $Z_1,\ldots,Z_n$ of length $n$ interleaved by $n$ blocks $Y_1,\ldots,Y_n$ of length $l_n$ (see Figure~\ref{fig:code_structure}). The parameter estimator $\td{\theta}$, although defined as acting on the entire $X^0_{-m_n+1}$, effectively will make use only of $Z^n = (Z_1,\ldots,Z_n)$. The $Z_j$'s are each distributed according to $P^n_{\theta_0}$, but they are not independent. Thus, the set
$$
\cM = \bigcup^n_{j=1} \{ (j-1)(n+l_n) + 1 \le i \le j(n+l_n) - l_n \}
$$
is the effective memory of $C^{n,m_n}_*$, and the effective memory length is $n^2$.

Let $Q^{(n)}$ denote the marginal distribution of $Z^n$, and let $\td{Q}^{(n)}$ denote the product of $n$ copies of $P^n_{\theta_0}$. We now show that we can approximate $Q^{(n)}$ by $\td{Q}^{(n)}$ in variational distance, increasingly finely with $n$.  Note that both $Q^{(n)}$ and $\td{Q}^{(n)}$ are defined on the $\sigma$-algebra $\cF^{(n)}$, generated by all $X_i$ except those in $Y_1,\ldots,Y_n$, so that $\dvar(Q^{(n)},\td{Q}^{(n)}) = \dvar(Q^{(n)},\td{Q}^{(n)}; \cF^{(n)})$. Therefore, using induction and the definition of the $\beta$-mixing coefficient (cf.~Section~\ref{ssec:sources}), we have
$$
\dvar (Q^{(n)},\td{Q}^{(n)}) \le (n-1) \beta_{\theta_0}(l_n) = O(1/n^{1+\eta}),
$$
where the last equality follows from Condition 1 and from our choice of $l_n$. This in turn implies the following useful fact (see also Lemma~4.1 of Yu \cite{Yu94}), which we shall heavily use in the proof: for any measurable function $\map{\sigma}{\cX^{n^2}}{[0,M]}$ with $M < \infty$,
\begin{eqnarray}
&& \left| \E_{Q^{(n)}}\big\{\sigma(Z^n)\big\} - \E_{\td{Q}^{(n)}} \big\{ \sigma(Z^n)\big\} \right| \le M(n-1)\beta_{\theta_0}(l_n) \nonumber\\
&& \qquad =  O(1/n^{1+\eta}),
\label{eq:blocking_2}
\end{eqnarray}
where the constant hidden in the $O(\cdot)$ notation depends on $M$ and on $\theta_0$.

\subsection{Construction of the database}
\label{ssec:database}

The database, or the first-stage codebook, $\bd{c}$ is constructed by random selection. Let $W$ be a probability distribution on $\Lambda$ which is absolutely continuous with respect to the Lebesgue measure and has an everywhere positive and continuous density $w(\theta)$.  Let $\bd{C} = \{\Theta(i)\}_{i \in \N}$ be a collection of independent random vectors taking values in $\Lambda$, each generated according to $W$ independently of $\bd{X}$. We use $\W$ to  denote the process distribution of $\bd{C}$.

Note that the first-stage codebook is countably infinite, which means that, in principle, both the encoder and the decoder must have unbounded memory in order to store it. This difficulty can be circumvented by using synchronized random number generators at the encoder and at the decoder, so that the entries of $\bd{C}$ can be generated as needed. Thus, by construction, the encoder will generate $T_n$ samples (where $T_n$ is the waiting time) and then communicate (a binary encoding of) $T_n$ to the decoder. Since the decoder's random number generator is synchronized with that of the encoder's, the decoder will be able to recover the required entry of $\bd{C}$.

\subsection{Parameter estimation}
\label{ssec:param_estimation}

The parameter estimator $\map{\td{\theta}}{\cX^{m_n}}{\Lambda}$ is constructed as follows. Because the source $\bd{X}$ is stationary, it suffices to describe the action of $\td{\theta}$ on $X^0_{-m_n+1}$. In the notation of Section~\ref{ssec:main}, let $P_{Z^n}$ be the empirical distribution of $Z^n = (Z_1,\ldots,Z_n)$. For every $\theta \in \Lambda$, define
\begin{eqnarray*}
U_\theta(Z^n) &\deq& \sup_{A \in \cA_n} |P^n_\theta(A) - P_{Z^n}(A)| \\
& \equiv& \sup_{A \in \cA_n} \left| \int_A p_\theta(x^n)d\mu(x^n) - P_{Z^n}(A)\right|,
\end{eqnarray*}
where $\cA_n$ is the Yatracos class defined by the $n$th-order densities $\{p^n_\theta : \theta \in \Lambda\}$ (see Section~\ref{sec:results}). Finally, define $\tilde{\theta}(X^0_{-m+1})$ as any $\theta^* \in \Lambda$ satisfying
$$
U_{\theta^*}(Z^n) < \inf_{\theta \in \Lambda} U_\theta(Z^n) + \frac{1}{n},
$$
where the extra $1/n$ term is there to ensure that at least one such $\theta^*$ exists. This is the so-called {\em minimum-distance (MD) density estimator} of Devroye and Lugosi \cite{DevLug96,DevLug97} (see also Devroye and Gy\"orfi \cite{DevGyo01}), adapted to the dependent-process setting of the present paper. The key property of the MD estimator is that
\begin{equation}
d_n\Big(\td{\theta}(X^0_{m_n+1}),\theta_0\Big) \le 4U_{\theta_0}(Z^n) + \frac{3}{n}
\label{eq:mde_property}
\end{equation}
(see, e.g., Theorem~5.1 of Devroye and Gy\"orfi \cite{DevGyo01}). This holds regardless of whether the samples $Z_1,\ldots,Z_n$ are independent or not.

\subsection{Encoding and decoding of parameter estimates}
\label{ssec:1st_stage}

Next we construct the parameter encoder-decoder pair $(\td{g},\td{\psi})$. Given a $\theta \in \Lambda$, define the {\em waiting time}
$$
T_n(\theta) \deq \inf\{ i \ge 1 : d_n(\theta,\Theta(i)) \le \sqrt{n}\delta_n \},
$$
with the standard convention that the infimum of the empty set is equal to $+\infty$. That is, given a $\theta \in \Lambda$, the parameter encoder looks through the codebook $\bd{C}$ and finds the position of the first $\Theta(i)$ such that the variational distance between the $n$th-order distributions $P^n_{\theta}$ and $P^n_{\Theta(i)}$ is at most $\sqrt{n} \delta_n$. If no such $\Theta(i)$ is found, the encoder sets $T_n = + \infty$. We then define the maps $\td{g}$ and $\td{\psi}$ by
$$
\td{g}(\theta) = \left\{
\begin{array}{ll}
0s({T_n}), & \mbox{ if } T_n < \infty\\
1, & \mbox{ if } T_n = \infty \end{array}\right.
$$
and
$$
\td{\psi}(0s(i)) = \Theta(i), \qquad \td{\psi}(1) = \theta(1)
$$
respectively. Thus, $\td{\cS} = \{0s(i)\} \cup \{1\}$, and the bound
\begin{equation}
\ell(\td{g}(\theta)) \le \log T_n + O(\log \log T_n) 
\label{eq:1st_stage_length}
\end{equation}
holds for every $\theta \in \Lambda$, regardless of whether $T_n$ is finite or infinite.

\subsection{Performance of the code}
\label{ssec:performance}

Given the random codebook $\bd{C}$, the expected Lagrangian performance of our code on the source $P_{\theta_0}$, is
\begin{eqnarray}
&& L_{\theta_0}(C^{n,m_n}_*,\lambda) = \E_{\theta_0}\left\{g\left(X^n_1,X^0_{-m_n+1)}\right)\right\} \nonumber \\
&& \qquad = \E_{\theta_0}\left\{h\left(X^n_1,X^0_{-m_n+1}\right)\right\} \nonumber \\
&& \qquad \qquad + \lambda \E_{\theta_0}\left\{\ell_n\left(\td{f}\left(X^0_{-m_n+1}\right)\right)\right\}.
\label{eq:overall_lagrange_random}
\end{eqnarray}
We now upper-bound the two terms in (\ref{eq:overall_lagrange_random}). We start with the second term.

We need to bound the expectation of the waiting time $T_n = T_n(\td{\theta}(X^0_{-m_n+1}))$. Our strategy borrows some elements from the paper of Kontoyiannis and Zhang \cite{KonZha02}. Consider the probability
$$
q_n \deq W\left(d_n\big(\Theta,\td{\theta}(X^0_{-m_n+1})\big) \le \sqrt{n}\delta_n \right),
$$
which is a random function of $X^0_{-m_n+1}$. From Condition 2, it follows for $n$ sufficiently large that
$$
q_n \ge W\left(\| \Theta - \td{\theta}(X^0_{-m_n+1}) \| \le \delta_n/c_{\td{\theta}}\right),
$$
where $\td{\theta} \equiv \td{\theta}(X^0_{-m_n+1})$. Because the density $w(\theta)$ is everywhere positive, the latter probability is strictly positive for almost all $X^0_{-m_n+1}$, and so $q_n > 0$ eventually almost surely. Thus, the waiting times $T_n$ will be finite eventually almost surely (with respect to both the source $\bd{X}$ and the first-stage codebook $\bd{C}$). Now, if $q_n > 0$, then, conditioned on $X^0_{-m_n+1} = x^0_{-m_n+1}$, the waiting time $T_n$ is a geometric random variable with parameter $q_n$, and it is not hard to show (see, e.g., Lemma~3 of Kontoyiannis and Zhang \cite{KonZha02}) that for any $\epsilon > 0$
$$
\Pr\Big( \log [(T_n - 1)q_n] \ge \epsilon \Big| X^0_{-m_n+1} = x^0_{-m_n + 1} \Big) \le e^{-2^\epsilon}.
$$
Setting $\epsilon = \log(2 \log n)$, we have, for almost all $X^n_{-m_n+1}$, that
\begin{eqnarray*}
&& \Pr\Big( \log [(T_n - 1)q_n] \ge \log(2 \log n) \Big| X^0_{-m_n+1} = x^0_{-m_n+1} \Big) \\
&& \qquad \qquad \le e^{-2\log n} \le n^{-2}.
\end{eqnarray*}
Then, by the Borel--Cantelli lemma,
$$
\log (T_n q_n) \le \log\log n + 2
$$
eventually almost surely, so that
\begin{equation} \label{eq:expected_waittime}
\E_{\theta_0}\left\{\log T_n\right\} \le \log\log n +2 - \E_{\theta_0}\left\{\log q_n\right\}
\end{equation}
for almost every realization of the random codebook $\bd{C}$ and for sufficiently large $n$. We now obtain an asymptotic lower bound on $\E_{\theta_0} \{\log q_n\}$. Define the events
\begin{eqnarray*}
F_n &\deq& \left\{ (d_n\Big(\td{\theta}(X^0_{-m_n+1}),\theta_0\Big) \le \sqrt{n}\delta_n/2 \right\}, \\
G_n &\deq& \left\{ d_n(\Theta,\theta_0) \le \sqrt{n}\delta_n/2 \right\}, \\
H_n &\deq& \left\{ \|\Theta - \theta_0\| \le \delta_n/2c_{\theta_0} \right\}.
\end{eqnarray*}
Then by the triangle inequality we have
$$
F_n \mbox{ and } G_n \quad \Longrightarrow \quad d_n\Big(\Theta,\td{\theta}(X^0_{-m_n+1})\Big) \le \sqrt{n}\delta_n,
$$
and, for $n$ sufficiently large, we can write
\begin{eqnarray*}
q_n &\stackrel{{\rm (a)}}{\ge}& W(G_n) P_{\theta_0}(F_n) \\
&\stackrel{{\rm (b)}}=& W(G_n) Q^{(n)}(F_n) \\
&\stackrel{{\rm (c)}}{\ge}& W(H_n) Q^{(n)}(F_n),
\end{eqnarray*}
where (a) follows from the independence of $\bd{X}$ and $\bd{C}$, (b) follows from the fact that the parameter estimator $\td{\theta}(X^0_{-m_n+1})$ depends only on $Z^n$, and (c) follows from Condition 2 and the fact that $\delta_n \to 0$. Since the density $w$ is everywhere positive and continuous at $\theta_0$, $w(\theta) \ge w(\theta_0)/2$ for all $\theta \in H_n$ for $n$ sufficiently large, so
\begin{equation}
W(H_n) = \int_{H_n} w(\theta)d\theta \ge \frac{1}{2} w(\theta_0) v_k \left(\frac{\delta_n}{2c_{\theta_0}}\right)^k,
\label{eq:prob_bound_1}
\end{equation}
where $v_k$ is the volume of the unit sphere in $\R^k$. Next, the fact that the minimum-distance estimate $\td{\theta}(X^0_{-m_n+1})$ depends only on $Z^n$ implies that the event $F_n$ belongs to the $\sigma$-algebra $\cF^{(n)}$, and from (\ref{eq:blocking_2}) we get
\begin{equation}
Q^{(n)}(F_n) \ge \td{Q}^{(n)}(F_n) - O(1/n^{1+\eta}).
\label{eq:mde_bound_1}
\end{equation}
Under $\td{Q}^{(n)}$, the $n$-blocks $Z_1,\ldots,Z_n$ are i.i.d. according to $P^n_{\theta_0}$, and we can invoke the Vapnik--Chervonenkis machinery to lower-bound $\td{Q}^{(n)}(F_n)$. In the notation of Sec.~\ref{ssec:param_estimation}, define the event
$$
I_n \deq \left\{  4U_{\theta_0}(Z^n) + \frac{3}{n} \le \frac{\sqrt{n}\delta_n}{2} \right\}.
$$
Then $I_n$ implies $F_n$ by (\ref{eq:mde_property}), and
\begin{equation}
\td{Q}^{(n)}(F^c_n) \le \td{Q}^{(n)}(I^c_n) \le 8n^{\sV(\cA_n)} e^{-n(\sqrt{n}\delta_n - 6/n)^2/2048},
\label{eq:mde_bound_2}
\end{equation}
where the second bound is by the Vapnik--Chervonenkis inequality (\ref{eq:vcbound_1}) of Lemma~\ref{lm:vc}. Combining the bounds (\ref{eq:mde_bound_1}) and (\ref{eq:mde_bound_2}) and using Condition 1, we obtain
\begin{equation}
P^m_\theta(F_n) \ge 1 - 8n^{\sV(\cA_n)}e^{-n(\sqrt{n}\delta_n - 6/n)^2/2048} - O(1/n^{1+\eta})
\label{eq:prob_bound_2}
\end{equation}
Now, if we choose
$$
\delta_n = \frac{\sqrt{2048 (\sV(\cA_n) + 1)\ln n}}{n} + \frac{6}{n^{3/2}},
$$
then the right-hand side of (\ref{eq:prob_bound_2}) can be further lower-bounded by $1-O(1/n)$. Combining this with (\ref{eq:prob_bound_1}), taking logarithms, and then taking expectations, we obtain
\begin{eqnarray*}
\lefteqn{\E_{\theta_0}\{\log q_n\}} \\
 &\ge& \log (1 - O(1/n)) + k \log \delta_n + 2^{c(k,\theta_0)} \\
 &=& \log(1-O(1/n))  \\
 && \quad + k \log \left[\sqrt{2048(\sV(\cA_n) + 1)n\ln n} + 6\right] \\
 && \quad  + \frac{3k}{2}\log \frac{1}{n} + c(k,\theta_0) \\
&\ge& \log(1- O(1/n)) + \frac{3k}{2}\log \frac{1}{n} + c(k,\theta_0),
\end{eqnarray*}
where $c(k,\theta_0)$ is a constant that depends only on $k$ and $\theta_0$. Using this and (\ref{eq:expected_waittime}), we get that
$$
\E_{\theta_0}\{\log T_n\} \le \log\log n + O(\log n)
$$
for $\W$-almost every realization of the random codebook $\bd{C}$, for $n$ sufficiently large. Together with (\ref{eq:1st_stage_length}), this implies that
\begin{eqnarray*}
\lefteqn{\E_{\theta_0}\left\{ \ell_n\left(\td{f}\left(X^0_{-m_n+1}\right)\right)\right\}}\\
& =& O\left(\frac{\log n}{n} \right)  + O\left(\frac{\log \log n}{n}\right) +\frac{3}{n} + o(1)
\end{eqnarray*}
for $\W$-almost all realizations of the first-stage codebook.

We now turn to the first term in (\ref{eq:overall_lagrange_random}). Recall that, for each $\theta \in \Lambda$, the code $C^n_\theta$ is $n$th-order optimal for $P_\theta$. Using this fact together with the boundedness of the distortion measure $\rho$, we can invoke Lemma~\ref{lm:finite_codebook} in the Appendix and assume without loss of generality that each $C^n_\theta$ has a finite codebook (of size not exceeding $2^{n\rho_{\max}/\lambda}$), and each codevector can be described by a binary string of no more than $2n\rho_{\max}/\lambda$ bits. Hence, $h(X^n_1,X^0_{-m_n+1}) \le 3\rho_{\max}$. Let $P^-$ and $P^+$ be the marginal distributions of $P_{\theta_0}$ on $\sigma(X^0_{-\infty})$ and $\sigma(X^\infty_1)$, respectively. Note that $h(X^n,X^0_{-m_n+1})$ does not depend on $X^0_{-l_n+1}$. This, together with Condition 1 and the choice of $l_n$, implies that
\begin{eqnarray*}
&& \E_{\theta_0}\left\{h\left(X^n_1,X^0_{-m_n+1}\right)\right\} \\
&& \qquad \le \E_{P^- \times P^+} \left\{h\left(X^n_1,X^0_{-m_n+1}\right)\right\} + \beta_{\theta_0}(l_n) \\
&& \qquad = \E_{P^- \times P^+} \left\{h\left(X^n_1,X^0_{-m_n+1}\right)\right\} + O(1/n^{2+\eta}).
\end{eqnarray*}
Furthermore, 
\begin{eqnarray*}
&& \E_{P^- \times P^+} \left\{h\left(X^n_1,X^0_{-m_n+1}\right)\right\} \\
&& \qquad = \int_{\cX^n \times \cX^{m_n}} h(x^n,z^{m_n}) dP_{\theta_0}(x^n) dP_{\theta_0}(z^{m_n}) \\
&& \qquad \stackrel{{\rm (a)}}{=} \int_{\cX^{m_n}} \E_{\theta_0}\left\{h(X^n_1,z^{m_n})\right\} dP_{\theta_0}(z^{m_n}) \\
&& \qquad \stackrel{{\rm (b)}}{=} \E_{\theta_0}\left\{L_{\theta_0}\Big(C^n_{\wh{\theta}(X^0_{-m_n+1})},\lambda\Big)\right\},
\end{eqnarray*}
where (a) follows by Fubini's theorem and the boundedness of $h$, while (b) follows from the definition of $h$. The Lagrangian performance of the code $C^n_{\wh{\theta}}$, where $\wh{\theta} = \wh{\theta}(X^0_{-m_n+1})$, can be further bounded as
\begin{eqnarray*}
&& L_{\theta_0}\left(C^n_{\wh{\theta}},\lambda\right) \\
&& \quad \stackrel{{\rm (a)}}{\le} L_{\wh{\theta}}\left(C^n_{\wh{\theta}},\lambda\right) + 3\rho_{\max}d_n\Big(\wh{\theta}(X^0_{-m_n+1}),\theta_0\Big) \\
&& \quad \stackrel{{\rm (b)}}{=} \wh{L}^n_{\wh{\theta}}(\lambda) + 3\rho_{\max}d_n\Big(\wh{\theta}(X^0_{-m_n+1}),\theta_0\Big)\\
&& \quad \stackrel{{\rm (c)}}{\le} \wh{L}^n_{\theta_0}(\lambda) + 4\rho_{\max}d_n\Big(\wh{\theta}(X^0_{-m_n+1}),\theta_0\Big)\\
&& \quad \stackrel{{\rm (d)}}{\le} \wh{L}^n_{\theta_0}(\lambda) + 4\rho_{\max}\Big[d_n\Big(\wh{\theta}(X^0_{-m_n+1}),\td{\theta}(X^0_{-m_n+1})\Big) \\
&& \qquad \qquad \qquad \qquad + d_n\Big(\td{\theta}(X^0_{-m_n+1}),\theta_0\Big)\Big],
\end{eqnarray*}
where (a) follows from Lemma~\ref{lm:finite_codebook} in the Appendix, (b) follows from the $n$th-order optimality of $C^n_{\wh{\theta}}$ for $P^n_{\wh{\theta}}$, (c) follows, overbounding slightly, from the Lagrangian mismatch bound of Lemma~\ref{lm:mismatch} in the Appendix, and (d) follows from the triangle inequality. Taking expectations, we obtain
\begin{eqnarray}
&& \E_{\theta_0}\left\{ L_{\theta_0}\big(C^n_{\wh{\theta}(X^0_{-m_n+1})},\lambda\big)\right\} \le \wh{L}^n_{\theta_0}(\lambda) \nonumber \\
&& \qquad + 4\rho_{\max}\cdot\E_{\theta_0}\Big\{d_n\Big(\wh{\theta}(X^0_{-m_n+1}),\td{\theta}(X^0_{-m_n+1})\Big) \nonumber \\
&& \qquad + d_n\Big(\td{\theta}(X^0_{-m_n+1}),\theta_0\Big)\Big\}.
\label{eq:lagrange_bound}
\end{eqnarray}
The second $d_n(\cdot,\cdot)$ term in (\ref{eq:lagrange_bound}) can be interpreted as the estimation error due to estimating $P^n_{\theta_0}$ by $P^n_{\td{\theta}}$, while the first $d_n(\cdot,\cdot)$ is the approximation error due to quantization of the parameter estimate $\td{\theta}$. We examine the estimation error first. Using (\ref{eq:mde_property}), we can write
\begin{equation}
\E_{\theta_0}\left\{\dvar\Big(P^n_{\theta^*(X^0_{-m+1})},P^n_{\theta_0}\Big)\right\} \le 4\E_{\theta_0}\{U_{\theta_0}(Z^n)\} + \frac{3}{n}.
\label{eq:mde_error}
\end{equation}
Now, each $Z_j$ is distributed according to $P^n_{\theta_0}$, and we can approximate the expectation of $U_{\theta_0}(Z^n)$ with respect to $Q^{(n)}$ by the expectation of $U_{\theta_0}(Z^n)$ with respect to the product measure $\td{Q}^{(n)}$:
\begin{eqnarray*}
\E_{Q^{(n)}}\left\{U_{\theta_0}(Z^n)\right\} &\le& \E_{\td{Q}^{(n)}}\left\{U_{\theta_0}(Z^n)\right\} + (n-1)\beta_\theta(l_n) \\
&\le& c\sqrt{\frac{\sV(\cA_n)\log n}{n}} + O\left(\frac{1}{n^{1+\eta}}\right) \\
&=& O\left(\sqrt{\frac{\sV(\cA_n)\log n}{n}}\right),
\end{eqnarray*}
where the second estimate follows from the Vapnik--Chervonenkis inequality (\ref{eq:vcbound_2}) and from the choice of $l_n$. This, together with (\ref{eq:mde_error}), yields
\begin{equation}
\E_{\theta_0}\left\{\dvar\Big(P^n_{\theta^*(X^0_{-m+1})},P^n_{\theta_0}\Big)\right\}  = O\left(\sqrt{\frac{\sV(\cA_n)\log n}{n}}\right).
\label{eq:estimation_error}
\end{equation}
As for the first $d_n(\cdot,\cdot)$ term in (\ref{eq:lagrange_bound}), we have, by construction of the first-stage encoder, that
\begin{eqnarray}
&& d_n\Big(\wh{\theta}(X^0_{-m_n+1}),\td{\theta}(X^0_{-m_n+1})\Big) \nonumber \\
&& \qquad \qquad \le \sqrt{n}\delta_n = O\left(\sqrt{\frac{\sV(\cA_n)\log n}{n}}\right)
\label{eq:approximation_error}
\end{eqnarray}
eventually almost surely, so the corresponding expectation is $O \big(\sqrt{\sV(\cA_n)\log n/n} \big)$ as well. Summing the estimates (\ref{eq:estimation_error}) and (\ref{eq:approximation_error}), we obtain
$$
\E_{\theta_0} \left\{ h\big(X^n_1,X^0_{-m_n+1}\big) \right\} = \wh{L}^n_{\theta_0}(\lambda) + O\left(\sqrt{\frac{\sV(\cA_n)\log n}{n}}\right).
$$
Finally, putting everything together, we see that, eventually,
\begin{eqnarray}
&& L_{\theta_0}\Big(C^{n,m_n}_*\Big) = \wh{L}^n_{\theta_0}(\lambda) + O\left(\sqrt{\frac{\sV(\cA_n)\log n}{n}}\right) \nonumber\\
&& \,\, + \lambda\Big[O\Big(\frac{\log n}{n} \Big) + O\Big(\frac{\log \log n}{n}\Big)   + \frac{3}{n} + o(1)\Big]
\end{eqnarray}
for $\W$-almost every realization of the first-stage codebook $\bd{C}$. This proves (\ref{eq:zeromem_universality}), and hence (\ref{eq:universality}).

\subsection{Identification of the active source}
\label{ssec:src_ident}

We have seen that the expected variational distance $\E_{\theta_0}\left\{d_n\big(\theta_0,\wh{\theta}(X^0_{-m_n+1})\big)\right\}$ between the $n$-dimensional marginals of the true source $P_{\theta_0}$ and the estimated source $P_{\wh{\theta}(X^0_{-m_n+1})}$ converges to zero as $\sqrt{\sV(\cA_n)\log n/n}$. We wish to show that this convergence also holds eventually with probability one, i.e.,
\begin{equation}
d_n(\theta_0,\wh{\theta}(X^0_{-m_n+1})) = O\left(\sqrt{\frac{\sV(\cA_n)\log n}{n}}\right)
\label{eq:as_bound}
\end{equation}
$P_{\theta_0}$-almost surely.

Given an $\epsilon > 0$, we have by the triangle inequality that $d_n(\theta_0,\wh{\theta}(X^0_{-m_n+1})) > \epsilon$ implies
$$
d_n \Big(\theta_0,\td{\theta}(X^0_{-m_n+1}) \Big) + d_n \Big (\td{\theta}(X^0_{-m_n+1}),\wh{\theta}(X^0_{-m_n+1}) \Big) > \epsilon,
$$
where $\td{\theta}(X^0_{-m_n+1})$ is the minimum-distance estimate of $P^n_{\theta_0}$ from $X^0_{-m_n+1}$ (cf.~Section~\ref{ssec:1st_stage}). Recalling our construction of the first-stage encoder, we see that this further implies
$$
d_n \Big(\theta_0,\td{\theta}(X^0_{-m_n+1}) \Big) > \epsilon - \sqrt{n}\delta_n.
$$
Finally, using the property (\ref{eq:mde_property}) of the minimum-distance estimator, we obtain that
$$
d_n \Big(\theta_0,\wh{\theta}(X^0_{-m_n+1}) \Big) > \epsilon
$$
implies
$$
U_{\theta_0}(Z^n) > \frac{1}{4}\left(\epsilon - \sqrt{n}\delta_n - \frac{3}{n}\right).
$$
Therefore,
\begin{eqnarray}
\lefteqn{Q^{(n)} \left\{ d_n \Big(\theta_0,\wh{\theta}(X^0_{-m_n+1}) \Big) > \epsilon \right\}} \nonumber \\
&\le& Q^{(n)} \left\{ U_{\theta_0}(Z^n) > \frac{1}{4}\left(\epsilon - \sqrt{n}\delta_n - \frac{3}{n}\right) \right\} \nonumber\\
&\stackrel{{\rm (a)}}{\le}& \td{Q}^{(n)} \left\{ U_{\theta_0}(Z^n) > \frac{1}{4}\left(\epsilon - \sqrt{n}\delta_n - \frac{3}{n}\right) \right\} \nonumber\\
&& \qquad \qquad  + (n-1)\beta_{\theta_0}(l_n) \nonumber\\
&\stackrel{{\rm (b)}}{\le}& 8n^{\sV(\cA_n)} \exp\left(-\frac{n(\epsilon - \sqrt{n}\delta_n - 3/n)^2}{512}\right) \nonumber\\
&& \qquad \qquad + (n-1)\beta_{\theta_0}(l_n),
\label{eq:prob_bound}
\end{eqnarray}
where (a) follows, as before, from the definition of the $\beta$-mixing coefficient and (b) follows by the Vapnik--Chervonenkis inequality. Now, if we choose
$$
\epsilon_n = \sqrt{\frac{512(\sV(\cA_n)\ln n + n \delta)}{n}} + \sqrt{n}\delta_n + \frac{3}{n}
$$
for an arbitrary small $\delta > 0$, then (\ref{eq:prob_bound}) can be further upper-bounded by $8e^{-n\delta} + \sum_n n\beta_\theta(\ell_n)$, which, owing to Condition 1 and the choice $l_n = \lceil n^{(2+\eta)/r} \rceil$, is summable in $n$. Thus,
$$
\sum_n Q^{(n)}\left\{ d_n \Big(\theta_\theta,\wh{\theta}(X^0_{-m_n+1}) \Big) > \epsilon_n \right\} < \infty,
$$
and we obtain (\ref{eq:as_bound}) by the Borel--Cantelli lemma.

\section{Examples}
\label{sec:examples}

\subsection{Stationary memoryless sources}
\label{ssec:iid}

As a basic check, let us see how Theorem~\ref{thm:main} applies to stationary memoryless (i.i.d.) sources. Let $\cX = \R$, and let $\{P_\theta : \theta \in \Lambda\}$ be the collection of all Gaussian i.i.d. processes, where
$$
\Lambda = \{(m,\sigma) : m \in \R, 0 < \sigma < \infty \} \subset \R^2.
$$
Then the $n$-dimensional marginal for a given $\theta = (m,\sigma)$ has the Gaussian density
$$
p_\theta(x^n) = \frac{1}{(2\pi\sigma^2)^{n/2}} \prod^n_{i=1} e^{-(x_i - m)^2/2\sigma^2}
$$
with respect to the Lebesgue measure. This class of sources trivially satisfies Condition 1 with $r = +\infty$, and it remains to check Conditions 2 and 3.

To check Condition 2, let us examine the normalized $n$th-order relative entropy between $P_\theta$ and $P_{\theta'}$, with $\theta = (m,\sigma)$ and $\theta' = (m',\sigma')$. Because the sources are i.i.d.,
\begin{eqnarray*}
&& D_n(P_\theta \| P_{\theta'}) = D(P^1_\theta \| P^1_{\theta'} ) \\
&& \quad = \frac{1}{2}\left(\ln \left(\frac{\sigma}{\sigma'}\right)^2 + \left(\frac{\sigma'}{\sigma}\right)^2 + \frac{(m - m')^2}{{\sigma'}^2} - 1\right). 
\end{eqnarray*}
Applying the inequality $\ln x \le x - 1$ and some straightforward algebra, we get the bound
\begin{eqnarray*}
D_n(P_\theta \| P_{\theta'}) &\le& \left( \frac{\sigma + \sigma'}{\sigma}\right)^2 \frac{(\sigma - \sigma')^2}{2{\sigma'}^2} +  \frac{(m-m')^2}{2{\sigma'}^2} \\
&\le& \left(1 + \frac{\sigma'}{\sigma}\right)^2 \frac{ \| \theta - \theta' \|^2 } {2{\sigma'}^2}.
\end{eqnarray*}
Now fix a small $\delta \in (0,\sigma)$, and suppose that $\| \theta - \theta' \|  < \delta$. Then $|\sigma - \sigma'| < \delta$, so we can further upper-bound $D_n(P_\theta \| P_{\theta'})$ by
$$
D_n(P_\theta \| P_{\theta'}) \le \frac{9}{2(\sigma - \delta)^2} \| \theta - \theta' \|^2.
$$
Thus, for a given $\theta \in \Lambda$, we see that
$$
D_n(P_\theta \| P_{\theta'}) \le \frac{c^2_\theta}{2} \| \theta - \theta'\|^2
$$
for all $\theta'$ in the open ball of radius $\delta$ around $\theta$, with $c_\theta \deq 3/(\sigma - \delta)$. Using Pinsker's inequality, we have
$$
\frac{d_n(\theta,\theta')}{\sqrt{n}} \le \sqrt{2 D_n(P_\theta \| P_{\theta'})} \le c_\theta \| \theta - \theta' ||
$$
for all $n$. Thus, Condition 2 holds.

To check Condition 3, note that, for each $n$, the Yatracos class $\cA_n$ consists of all sets of the form
\begin{eqnarray}
&& \Bigg\{ x^n \in \R^n : \ln \sigma^2 - \ln {\sigma'}^2 + \frac{1}{\sigma^2} \sum^n_{i=1} (x_i - m)^2 \nonumber\\
&& \qquad  \qquad - \frac{1}{{\sigma'}^2}\sum^n_{i=1} (x_i - m')^2 > 0 \Bigg\}
\label{eq:yatracos_gauss}
\end{eqnarray}
for all $ m,m' \in \R; \sigma,\sigma' \in (0,\infty)$. Let $\alpha \deq \ln \sigma^2$ and $\alpha' \deq \ln {\sigma'}^2$. Then we can rewrite (\ref{eq:yatracos_gauss}) as
\begin{eqnarray*}
&& \Bigg\{ x^n \in \R^n : \alpha - \alpha' +\frac{1}{\sigma^2} \sum^n_{i=1}(x_i - m)^2 \\
&& \qquad \qquad - \frac{1}{{\sigma'}^2} \sum^n_{i=1}(x_i - m')^2 > 0 \Bigg\}.
\end{eqnarray*}
This is the set of all $x^n \in \R^n$ such that
$$
\Pi(x^n,\alpha,\alpha',1/\sigma^2, 1/{\sigma'}^2,m,m') > 0,
$$
where $\Pi(x^n,\cdot)$ is a third-degree polynomial in the six parameters $(\alpha,\alpha',1/\sigma^2,1/{\sigma'}^2,m,m')$. It then follows from Lemma~\ref{lm:karpinski_macintyre} that $\cA_n$ is a VC class with $\sV(\cA_n) \le 12\log(12e)$. Therefore, Condition 3 holds as well.

\subsection{Autoregressive sources}
\label{ssec:ar}

Again, let $\cX = \R$ and consider the case when $\bd{X}$ is a Gaussian autoregressive source of order $p$, i.e., it is the output of an autoregressive filter of order $p$ driven by white Gaussian noise. Then there exist $p$ real parameters $a_1,\ldots,a_p$ (the filter coefficients), such that
$$
X_n = - \sum^p_{i=1} a_i X_{n-i} + Y_n,\qquad \forall n \in \N
$$
where $\bd{Y} = \{Y_i\}_{i \in \Z}$ is an i.i.d. Gaussian process with zero mean and unit variance. Let $\Lambda \subset \R^p$ be the set of all $a_1,\ldots,a_p$, such that the roots of the polynomial $A(z) = \sum^p_{i=0}a_i z^i$, where $a_0 \equiv 1$, lie outside the unit circle in the complex plane. This ensures that $\bd{X}$ is a stationary process. We now proceed to check that Conditions 1--3 of Section~\ref{sec:results} are satisfied.

The distribution of each $Y_i$ is absolutely continuous, and we can invoke the result of Mokkadem \cite{Mok88} to conclude that, for each $\theta \in \Lambda$, the process $\bd{X}$ is {\em geometrically mixing}, i.e., for every $\theta \in \Lambda$, there exists some $\gamma = \gamma(\theta) \in (0,1)$, such that $\beta_\theta(k) = O(\gamma^k)$. Now, for any fixed $r > 0$, $\gamma^k \le k^{-r}$ for $k$ sufficiently large, so Condition 1 holds. 

To check Condition 2, note that, for each $\theta \in \Lambda$, the Fisher information matrix $I_n(\theta)$ is independent of $n$ (see, e.g., Section~6 of Klein and Spreij \cite{KleSpr06}). Thus, the asymptotic Fisher information matrix $I(\theta) = \lim_{n \to \infty} I_n(\theta)$ exists and is nonsingular \cite[Theorem~6.1]{KleSpr06}, so, recalling the discussion in Section~\ref{sec:results}, we conclude that Condition 2 holds also.

To verify Condition 3, consider the $n$-dimensional marginal $P_\theta(x^n)$, which has the Gaussian density
$$
p_\theta(x^n) = \frac{1}{(2\pi\det R_n(\theta))^{n/2}} e^{-\frac{1}{2}{x^n}^T R^{-1}_n(\theta) x^n},
$$
where $R_n(\theta) \equiv \E_\theta\big\{ (X^n_1)^T X^n_1 \big\}$ is the $n$th-order autocorrelation matrix of $\bd{X}$. Thus, the Yatracos class $\cA_n$ consists of sets of the form
\begin{eqnarray*}
&& A_{\theta,\theta'} = \Bigg\{x^n \in \R^n : \frac{n}{2} \ln \det R^{-1}_n(\theta) - \frac{1}{2} {x^n}^T R^{-1}_n(\theta) x^n \\
&& \qquad \qquad > \frac{n}{2} \ln \det R^{-1}_n(\theta') - \frac{1}{2} {x^n}^T R^{-1}_n(\theta') x^n \Bigg\}
\end{eqnarray*}
for all $\theta,\theta' \in \Lambda$. Now, for every $\theta \in \Lambda$, let $\bar{\theta} \deq (\theta,\ln \det R^{-1}_n(\theta))$. Since $\ln \det R^{-1}_n(\theta)$ is uniquely determined by $\theta$, we have $A_{\theta,\theta'} = A_{\bar{\theta},\bar{\theta}'}$ for all $\theta, \theta' \in \Lambda$. Using this fact, as well as the easily established fact that the entries of the inverse covariance matrix $R^{-1}_n(\theta)$ are second-degree polynomials in the filter coefficients $a_1,\ldots,a_p$, we see that, for each $x^n$, the condition $x^n \in A_{\theta,\theta'}$ can be expressed as $\Pi(x^n,\bar{\theta}) > 0$, where $\Pi(x^n,\cdot)$ is quadratic in the $2p+2$ real variables $\bar{\theta}_1,\ldots,\bar{\theta}_{p+1},\bar{\theta}'_1,\ldots,\bar{\theta}'_{p+1}$. Thus, we can apply Lemma~\ref{lm:karpinski_macintyre} to conclude that $\sV(\cA_n) \le (4p+4)\log(8e)$. Therefore, Condition 3 is satisfied as well.

\subsection{Hidden Markov processes}
\label{ssec:hmp}

A hidden Markov process (or a hidden Markov model, see, e.g., \cite{BicRitRyd98}) is a discrete-time bivariate random process $\{(S_i,X_i)\}$, where $\bd{S} = \{S_i\}$ is a homogeneous Markov chain and $\bd{X} = \{X_i\}$ is a sequence of random variables which are conditionally independent given $\bd{S}$, and the conditional distribution of $X_n$ is time-invariant and depends on $\bd{S}$ only through $S_n$. The Markov chain $\bd{S}$, also called the {\em regime}, is not available for observation. The observable component $\bd{X}$ is the source of interest. In information theory (see, e.g., \cite{EphMer02} and references therein), a hidden Markov process is a discrete-time finite-state homogeneous Markov chain $\bd{S}$, observed through a discrete-time memoryless channel, so that $\bd{X} = \{X_i\}$ is the observation sequence at the output of the channel.

Let $M$ denote the number of states of $\bd{S}$. We assume without loss of generality that the state space $\cS$ of $\bd{S}$ is the set $\{1,2,\ldots,M\}$. Let $A = [a_{ij}]_{i,j = 1,\ldots,M}$ denote the $M \times M$ transition matrix of $\bd{S}$, where $a_{ij} \deq \Pr(S_{t+1} = j | S_t = i)$. If $A$ is ergodic (i.e., irreducible and aperiodic), then there exists a unique probability distribution $\pi$ on $\cS$ such that $\pi = \pi A$ (the {\em stationary distribution} of $\bd{S}$), see, e.g., Section~8 of Billingsley \cite{Bil95}. Because in this paper we deal with two-sided random processes, we assume that $\bd{S}$ has been initialized with its stationary distribution at some time sufficiently far away in the past, and can therefore be thought of as a two-sided stationary process. Now consider a discrete-time memoryless channel with input alphabet $\cS$ and output (observation) alphabet $\cX = \R^d$ for some $d < \infty$. It is specified by a set $\{ p(\cdot|s) : s = 1,2,\ldots,M \}$ of transition densities (with respect to $\mu$, the Lebesgue measure on $\R^d$). The channel output sequence $\bd{X}$ is the source of interest.

Let us take as the parameter space $\Lambda \subset \R^{M \times M}$ the set of all $M\times M$ transition matrices $[a_{ij}]$, such that all $a_{ij} > a_0$ for some fixed $a_0 > 0$. For each $\theta = [a_{ij}] \in \Lambda$ and each $n \in \N$, the density $dP^n_\theta/d\mu^n$ is given by
$$
p_\theta(x^n) = \sum_{s^n \in \cS^n} \prod^n_{i=1} a_{s_{i-1}s_i} p(x_i|s_i),
$$
where $a_{s_0s} \equiv \pi_s$ for every $s \in \cS$. We assume that the channel transition densities $p(\cdot|s), s \in \cS$, are fixed {\em a priori}, and do not include them in the parametric description of the sources. We do require, though, that
$$
\sum_{s \in \cS} p(x|s) > 0, \qquad \forall x \in \cX
$$
and
$$
\E_\theta\left\{ \log \sum_{s \in \cS} p(X|s) \right\} < \infty, \qquad \forall \theta \in \Lambda.
$$
We now proceed to verify that Conditions 1--3 of Section~\ref{sec:results} are met.

Let $p^{(n)}_{ij}  = \Pr(S_{t+n} = j | S_t = i)$ denote the $n$-step transition probability for states $i,j \in \cS$. The positivity of $A$ implies that the Markov chain $\bd{S}$ is {\em geometrically ergodic}, i.e.,
\begin{equation}
|p^{(n)}_{ij} - \pi_j| \le C\gamma^n, \qquad \forall i,j \in \cS; \forall n \in \N
\label{eq:geometric_ergodicity}
\end{equation}
where $C \ge 0$ and $0 \le \gamma < 1$, see Theorem~8.9 of Billingsley \cite{Bil95}. Note that (\ref{eq:geometric_ergodicity}) implies that
$$
\dvar(p^{(n)}(\cdot|i),\pi) \le MC\gamma^n.
$$
This in turn implies that the sequence $\bd{S} = \{S_i\}$ is exponentially $\beta$-mixing, see Theorem~3.10 of Vidyasagar \cite{Vid03}. Now, one can show (see Section~3.5.3 of Vidyasagar \cite{Vid03}) that there exists a measurable mapping $\map{F}{\cS \times [0,1]}{\cX}$, such that
$X_i = F(S_i,U_i)$, where $\bd{U} = \{ U_i \}$ is an i.i.d. sequence of random variables distributed uniformly on $[0,1]$, independently of $\bd{S}$. It is not hard to show that, if $\bd{S}$ is exponentially $\beta$-mixing, then so is the bivariate process $\{(S_i,U_i)\}$. Finally, because $X_i$ is given by a time-invariant deterministic function of $(S_i,U_i)$, the $\beta$-mixing coefficients of $\bd{X}$ are bounded by the corresponding $\beta$-mixing coefficients of $(\bd{X},\bd{U})$, and so $\bd{X}$ is exponentially $\beta$-mixing as well. Thus, for each $\theta \in \Lambda$, there exists a $\gamma = \gamma(\theta) \in [0,1)$, such that $\beta_\theta(k) = O(\gamma^k)$, and consequently Condition~1 holds.

To show that Condition 2 holds, we again examine the asymptotic behavior of the Fisher information matrix $I_n(\theta)$ as $n \to \infty$. Under our assumptions on the state transition matrices in $\Lambda$ and on the channel transition densities $\{p(\cdot|s) : s \in \cS\}$, we can invoke the results of Section~6.2 in Douc, Moulines and Ryd\'en \cite{DouMouRyd04} to conclude that the asymptotic Fisher information matrix $I(\theta) = \lim_{n \to \infty} I_n(\theta)$ exists (though it is not necessarily nonsingular). Thus, Condition~2 is satisfied.

Finally we check Condition 3. The Yatracos class $\cA_n$ consists of all sets of the form
\begin{eqnarray*}
&& A_{\theta,\theta'} = \Bigg\{ x^n \in \cX^n : \sum_{s^n \in \cS^n} \left( \prod^n_{i=1} a_{s_{i-1}s_i} - \prod^n_{i=1} a'_{s_{i-1}s_i} \right) \\
&& \qquad \qquad \qquad \qquad \times \prod^n_{j=1} p(x_j|s_j) > 0 \Bigg\}
\end{eqnarray*}
for all $\theta = [a_{ij}], \theta' = [a'_{ij}] \in \Lambda$. The condition $x^n \in A_{\theta,\theta'}$ can be written as $\Pi(x^n,\theta,\theta') > 0$, where for each $x^n$, $\Pi(x^n,\theta,\theta')$ is a polynomial of degree $n$ in the $2M^2$ parameters $a_{ij},a'_{kl}$, $1 \le i,j,k,l \le M$. Thus, Lemma~\ref{lm:karpinski_macintyre} implies that $\sV(\cA_n) \le 4M^2\log(4en)$, so Condition~3 is satisfied as well.

\section{Conclusions and future directions}
\label{sec:conclusion}

We have shown that, given a parametric family of stationary mixing sources satisfying some regularity conditions, there exists a universal scheme for joint lossy compression and source identification, with the $n$th-order Lagrangian redundancy and the variational distance between $n$-dimensional marginals of the true and the estimated source both converging to zero as $\sqrt{V_n\log n/n}$, as the block length $n$ tends to infinity. The sequence $\{V_n\}$ quantifies the learnability of the $n$-dimensional marginals. This generalizes our previous results from \cite{Rag05,Rag06} for i.i.d. sources.

We can outline some directions for future research.
\begin{itemize}
\item Both in our earlier work \cite{Rag05,Rag06} and in the present paper, we assume that the dimension of the parameter space is known {\em a priori}. It would be of interest to consider the case when the parameter space is finite-dimensional, but its dimension is not known. Thus, we would have a hierarchical model class $\bigcup^\infty_{k=1}\{P_\theta : \theta \in \Lambda^{(k)} \}$, where, for each $k$, $\Lambda^{(k)}$ is an open subset of $\R^k$, and we could use a complexity regularization technique, such as ``structural risk minimization" (see, e.g., Lugosi and Zeger \cite{LugZeg96} or Chapter 6 of Vapnik \cite{Vap98}), to adaptively trade off the estimation and the approximation errors.
\item The minimum-distance density estimator of Devroye and Lugosi \cite{DevLug96,DevLug97}, which plays the key role in our scheme both here and in \cite{Rag05,Rag06}, is not easy to implement in practice, especially for multidimensional alphabets. On the other hand, there are two-stage universal schemes, such as that of Chou, Effros and Gray \cite{ChoEffGra96}, which do not require memory and select the second-stage code based on pointwise, rather than average, behavior of the source. These schemes, however, are geared toward compression, and do not emphasize identification. It would be worthwhile to devise practically implementable universal schemes that strike a reasonable compromise between these two objectives.
\item Finally, neither here nor in our earlier work \cite{Rag05,Rag06} have we considered the issues of optimality. It would be of interest to obtain lower bounds on the performance of any universal scheme for joint lossy compression and identification, say, in the spirit of minimax lower bounds in statistical learning theory (cf., e.g., Chapter 14 of Devroye, Gy\"orfi and Lugosi \cite{DevGyoLug96}).
\end{itemize}

Conceptually, our results indicate that links between statistical modeling (parameter estimation) and universal source coding, exploited in the lossless case by Rissanen \cite{Ris84,Ris96}, are present in the domain of lossy coding as well. We should also mention that another modeling-based approach to universal lossy source coding, due to Kontoyiannis and others (see, e.g., Madiman and Kontoyiannis \cite{MadKon04} and references therein), treats code selection as a statistical estimation problem over a class of model distributions in the {\em reproduction space}. This approach, while closer in spirit to Rissanen's Minimum Description Length (MDL) principle \cite{BarRisYu98}, does not address the problem of joint source coding and identification, but it provides a complementary perspective on the connections between lossy source coding and statistical modeling.

\appendix
\renewcommand{\theequation}{A.\arabic{equation}}
\setcounter{equation}{0}
\renewcommand{\thelemma}{A.\arabic{lemma}}

\section{Properties of Lagrange-optimal variable-rate quantizers}

In this Appendix, we detail some properties of Lagrange-optimal variable-rate vector quantizers. Our exposition is patterned on the work of Linder \cite{Lin01}, with appropriate modifications.

As elsewhere in the paper, let $\cX$ be the source alphabet and $\hcX$ the reproduction alphabet, both assumed to be Polish spaces. As before, let the distortion function $\rho$ be induced by a $\rho_{\max}$-bounded metric on a Polish metric space $\cY$ containing $\cX \cup \hcX$. For every $n = 1,2,\ldots$, define the metric $\rho_n$ on $\cY^n$ by
$$
\rho_n(y^n,u^n) \deq \frac{1}{n}\sum^n_{i=1} \rho(y_i,u_i).
$$
For any pair $P^{(1)},P^{(2)}$ of probability measures on $\cX^n$, let $\cP_n(P^{(1)},P^{(2)})$ be the set of all probability measures on $\cX^n \times \hcX^n$ having $P^{(1)}$ and $P^{(2)}$ as marginals, and define the {\em Wasserstein metric}
\begin{eqnarray*}
&& \overline{\rho}_n(P^{(1)},P^{(2)}) \deq \inf_{P \in \cP_n(P^{(1)},P^{(2)})} \E_P\left\{\rho_n(X^n,Y^n)\right\} \\
&& \quad \equiv \inf_{P \in \cP_n(P^{(1)},P^{(2)})} \int \rho_n(x^n,y^n) dP(x^n,y^n)
\end{eqnarray*}
(See Gray, Neuhoff and Shields \cite{GraNeuShi75} for more details and  applications.) Note that, because $\rho$ is a bounded metric, 
$$
 \int \rho_n(x^n,y^n) dP(x^n,y^n) \le \rho_{\max} \int 1_{\{x^n \neq y^n\}} dP(x^n,y^n)
$$
for all $P \in \cP_n(P^{(1)},P^{(2)})$. Taking the infimum of both sides over all $P \in \cP_n(P^{(1)},P^{(2)})$ and observing that
$$
\dvar (P^{(1)},P^{(2)}) = 2\inf_{P \in \cP_n(P^{(1)},P^{(2)})} \int 1_{\{x^n \neq y^n\}} dP(x^n,y^n),
$$
see, e.g., Section~I.5 of Lindvall \cite{Lin02}, we get the useful bound
\begin{equation}
\overline{\rho}_n(P^{(1)},P^{(2)}) \le \frac{1}{2} \rho_{\max}  \dvar(P^{(1)},P^{(2)}).
\label{eq:wasser_var}
\end{equation}

Now, for each $n$, let $\cM_n$ denote the set of all discrete probability distributions on $\hcX^n$ with finite entropy. That is, $Q \in \cM_n$ if and only if it is concentrated on a finite or a countable set $\{y_i\}_{i \in \cI_Q} \subset \hcX^n$, and
$$
H(Q) \deq -\sum_{i \in \cI_Q} Q(y_i)\log Q(y_i) < \infty.
$$
For every $Q \in \cM_n$, consider the set $\cC(Q)$ of all one-to-one maps $\map{c}{\cI_Q}{\{0,1\}^*}$, such that, for each $c \in \cC(Q)$, the collection $\{c(i)\}_{i \in \cI_Q}$ satisfies the Kraft inequality, and let
$$
\ell_Q \deq \min_{c \in \cC(Q)} \sum_{i \in \cI_Q} \ell(c(i))Q(y_i)
$$
be the minimum expected code length. Since the entropy of $Q$ is finite, there is always a minimizing $c^*_Q$, and the Shannon--Fano bound (see Section~5.4 of Cover and Thomas \cite{CovTho91}) guarantees that $\ell_Q \le H(Q) + 1 < \infty$.

Now, for any $\lambda > 0$, any probability distribution $P$ on $\cX^n$, and any $Q \in \cM_n$, define
$$
L_n(P,Q;\lambda) \deq \overline{\rho}_n(P,Q) + \lambda n^{-1}\ell_Q.
$$
To give an intuitive meaning to $L_n(P,Q;\lambda)$, let $X^n$ and $Y$ be jointly distributed random variables with $X^n \sim P$ and $Y \sim Q$, such that their joint distribution $\overline{P} \in \cP_n(P,Q)$ achieves $\overline{\rho}_n(P,Q)$. Then $L_n(P,Q;\lambda)$ is the expected Lagrangian performance, at Lagrange multiplier $\lambda$, of a {\em stochastic} variable-rate quantizer which encodes each point $x^n \in \cX^n$ as a binary codeword with length $c^*_Q(i)$ and decodes it to $y_i$ in the support of $Q$ with probability $\overline{P}(Y=y_i|X^n=x^n)$.

The following lemma shows that deterministic quantizers are as good as random ones:

\begin{lemma}
\label{lm:ran_det}

Let $L_P(C^n,\lambda)$ be the expected Lagrangian performance of an $n$-block variable rate quantizer operating on $X^n \sim P$, and let $\wh{L}^n_P(\lambda)$ be the expected Lagrangian performance, with respect to $P$, of the best $n$-block variable-rate quantizer. Then
$$
\wh{L}^n_P(\lambda) = \inf_{Q \in \cM_n} L_n(P,Q;\lambda).
$$
\end{lemma}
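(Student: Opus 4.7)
The plan is to prove the two inequalities $\inf_{Q \in \cM_n} L_n(P,Q;\lambda) \le \wh L^n_P(\lambda)$ and $\wh L^n_P(\lambda) \le \inf_{Q \in \cM_n} L_n(P,Q;\lambda)$ separately, by explicit constructions in each direction.

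For the first inequality, I would start from an arbitrary zero-memory variable-rate quantizer $C^n = (f,\varphi)$ (with finite Lagrangian, else there is nothing to show), and take $Q$ to be the reproduction distribution, i.e., the pushforward of $P$ under $C^n$. After collapsing codewords with identical decodings (which cannot hurt), we may assume $\varphi$ restricted to $f(\cX^n)$ is one-to-one, so $Q \in \cM_n$: the map $y_i \mapsto \varphi^{-1}(y_i)$ is a prefix code for $Q$ with mean length $n R_P(C^n)$, which is finite and hence entropy $H(Q) \le n R_P(C^n) < \infty$. Then the joint law of $(X^n, C^n(X^n))$ lies in $\cP_n(P,Q)$, giving $\overline\rho_n(P,Q) \le D_P(C^n)$, and the minimum-length code for $Q$ satisfies $\ell_Q \le n R_P(C^n)$. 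Adding these bounds yields $L_n(P,Q;\lambda) \le L_P(C^n,\lambda)$, and taking the infimum over $C^n$ gives the desired inequality.

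For the reverse inequality, fix $Q \in \cM_n$ with support $\{y_i\}_{i \in \cI_Q}$ and let $c^*_Q \in \cC(Q)$ achieve $\ell_Q$. Given any $\varepsilon > 0$, choose $\overline P \in \cP_n(P,Q)$ with $\E_{\overline P}\{\rho_n(X^n,Y)\} \le \overline\rho_n(P,Q) + \varepsilon$; since $\cX^n$ and $\hcX^n$ are Polish, a regular conditional distribution $\overline P(\cdot \mid X^n = x^n)$ exists. This data defines a stochastic variable-rate quantizer whose expected Lagrangian equals $L_n(P,Q;\lambda) + \varepsilon$. To derandomize it, I would define a deterministic encoder by
\[
f(x^n) = c^*_Q(i^*(x^n)), \qquad i^*(x^n) \in \argmin_{i \in \cI_Q}\bigl[\rho_n(x^n,y_i) + \lambda n^{-1}\ell(c^*_Q(i))\bigr],
\]
with ties broken lexicographically, and $\varphi(c^*_Q(i)) = y_i$; the argmin is attained because the terms with $\ell(c^*_Q(i)) > n\rho_{\max}/\lambda + \ell(c^*_Q(1))$ cannot compete, reducing the minimization to a finite set. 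Pointwise this deterministic rule does at least as well as the stochastic one (conditional on $X^n = x^n$, the minimum is $\le$ the conditional expectation), so $L_P(C^n,\lambda) \le L_n(P,Q;\lambda) + \varepsilon$, and sending $\varepsilon \downarrow 0$ and taking the infimum over $Q$ completes the proof.

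The main obstacles are measurability and the attainment of the various infima. The Wasserstein infimum $\overline\rho_n(P,Q)$ need not be achieved, which is why I work with an $\varepsilon$-optimal coupling and let $\varepsilon \to 0$. Measurability of $i^*(\cdot)$ and of the sets $A_i = \{x^n : i^*(x^n) = i\}$ follows because each summand is a measurable function of $x^n$ and the effective minimization is over a finite index set; standard measurable selection then produces the deterministic encoder. All other manipulations (Kraft/Shannon--Fano, triangle-inequality for couplings, tie-breaking) are routine.
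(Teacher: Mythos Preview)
Your proof is correct and follows essentially the same two-inequality structure as the paper: pushing forward $P$ through a given quantizer to obtain a competing $Q$ for one direction, and building a deterministic minimum-Lagrangian encoder over the support of a given $Q$ (using that the pointwise minimum is no larger than the conditional expectation under the coupling) for the other. Your extra care with $\varepsilon$-optimal couplings, codeword collapsing, and measurable selection of the argmin goes slightly beyond the paper's write-up (which simply assumes an optimal coupling exists and leaves measurability implicit), but the underlying argument is the same.
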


\begin{proof} Consider any quantizer $C^n = (f,\varphi)$ with $L_P(C^n,\lambda) < \infty$. Let $Q_{C^n}$ be the distribution of $C^n(X^n)$. Clearly, $Q_{C^n} \in \cM_n$, and
\begin{eqnarray*}
L_P(C^n,\lambda) &=& \E\left\{\rho_n(X^n,C^n(X^n))\right\} + \lambda  \E\left\{\ell_n(f(X^n))\right\}\\
&\ge& \overline{\rho}_n(P,Q_{C^n}) + \lambda n^{-1} \ell_{Q_{C^n}} \\
&=& L_n(P,Q_{C^n};\lambda).
\end{eqnarray*}
Hence, $\wh{L}^n_P(\lambda) \ge \inf_{Q \in \cM_n} L_n(P,Q;\lambda)$. To prove the reverse inequality, suppose that $X^n \sim P$ and $Y \sim Q$ achieve $\overline{\rho}_n(P,Q)$ for some $Q \in \cM_n$. Let $\overline{P}$ be their joint distribution. Let $\{y_i\}_{i \in \cI_Q} \subset \hcX^n$ be the support of $Q$, let $\map{c^*_Q}{\cI_Q}{\{0,1\}^*}$ achieve $\ell_Q$, and let $\cS = \{ c^*_Q(i) \}_{i \in \cI_Q}$ be the associated binary code. Define the quantizer $C^n = (f,\varphi)$ by
$$
\varphi(s) = y_i \quad \mbox{if } s = c^*_Q(i)
$$
and
$$
f(x^n) = \argmin_{s \in \cS} \left(\rho_n(x^n,\varphi(s)) + \lambda  \ell_n(s)\right).
$$
Then
$$
L_P(C^n,\lambda) = \E_P \left\{ \min_{s \in \cS} \left(\rho_n(X^n,\varphi(s)) + \lambda \ell_n(s)\right)\right\}.
$$
On the other hand,
\begin{eqnarray*}
&& L_n(P,Q;\lambda) \\
&& \qquad = \E_{\overline{P}} \left\{ \overline{\rho}_n(X^n_1,Y) + \lambda n^{-1} \ell_Q \right\} \\
&& \qquad = \int dP(x^n) \sum_{i \in \cI_Q} \left( \rho_n(x^n,y_i) + \lambda \ell_n(c^*_Q(i))\right)\\
&& \qquad \qquad \qquad \times \overline{P}(Y=y_i|X^n = x^n) \\
&& \qquad \ge \int dP(x^n) \min_{i \in \cI_Q} \left( \rho_n(x^n,y_i) + \lambda  \ell_n(c^*_Q(i))\right) \\
&& \qquad = \int dP(x^n) \min_{s \in \cS} \left( \rho_n(x^n,\varphi(s)) + \lambda \ell_n(s)\right)\\
&& \qquad \equiv L_P(C^n,\lambda),
\end{eqnarray*}
so that $\inf_{Q \in \cM_n} L_n(P,Q;\lambda) \ge \wh{L}^n_P(\lambda)$, and the lemma is proved.
\end{proof}

The following lemma gives a useful upper bound on the Lagrangian mismatch:

\begin{lemma}
\label{lm:mismatch}
Let $P,P'$ be probability distributions on $\cX^n$. Then
$$
\left|\wh{L}^n_P(\lambda) - \wh{L}^n_{P'}(\lambda)\right| \le \frac{1}{2} \rho_{\max} \dvar(P,P').
$$
\end{lemma}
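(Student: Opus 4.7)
The plan is to use Lemma A.1 to convert both $\wh{L}^n_P(\lambda)$ and $\wh{L}^n_{P'}(\lambda)$ into infima of $L_n(\cdot,Q;\lambda)$ over $Q \in \cM_n$, and then exploit the fact that the only $P$-dependent ingredient in $L_n(P,Q;\lambda) = \overline{\rho}_n(P,Q) + \lambda n^{-1}\ell_Q$ is the Wasserstein term $\overline{\rho}_n(P,Q)$; the codeword-length contribution $\lambda n^{-1} \ell_Q$ depends only on $Q$.

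Concretely, I would first observe that $\overline{\rho}_n$, being a Wasserstein metric, is a genuine (pseudo-)metric on probability measures on $\cX^n$, so it satisfies the triangle inequality
$$
\overline{\rho}_n(P,Q) \le \overline{\rho}_n(P,P') + \overline{\rho}_n(P',Q).
$$
Combining this with the bound (\ref{eq:wasser_var}), which gives $\overline{\rho}_n(P,P') \le \tfrac{1}{2}\rho_{\max} \dvar(P,P')$, yields for every fixed $Q \in \cM_n$
$$
L_n(P,Q;\lambda) \le L_n(P',Q;\lambda) + \tfrac{1}{2}\rho_{\max} \dvar(P,P').
$$

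Next, since this inequality holds for every $Q$, taking the infimum over $Q \in \cM_n$ on both sides and applying Lemma~\ref{lm:ran_det} gives
$$
\wh{L}^n_P(\lambda) \le \wh{L}^n_{P'}(\lambda) + \tfrac{1}{2}\rho_{\max} \dvar(P,P').
$$
The roles of $P$ and $P'$ are symmetric, so the reverse inequality follows identically, completing the proof.

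I do not anticipate a genuine obstacle: the only subtlety is the mild technical check that $\overline{\rho}_n$ is indeed a metric (so that the triangle inequality is legitimate), which follows from standard coupling/gluing arguments referenced in Gray--Neuhoff--Shields \cite{GraNeuShi75}. Everything else is bookkeeping: separate the $Q$-dependent rate term from the $P$-dependent distortion term, apply the triangle inequality to the Wasserstein distance, and invoke the variational-distance bound (\ref{eq:wasser_var}) already established in the Appendix.
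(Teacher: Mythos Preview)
Your proposal is correct and follows essentially the same route as the paper: both use Lemma~\ref{lm:ran_det} to pass to the infimum over $Q\in\cM_n$, cancel the $Q$-dependent rate term, apply the triangle inequality for $\overline{\rho}_n$, and finish with the bound (\ref{eq:wasser_var}). The only cosmetic difference is that the paper fixes a near-optimal $Q'$ for $P'$ and bounds the difference directly, whereas you keep $Q$ arbitrary and take the infimum at the end; the two presentations are equivalent.
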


\begin{proof} Suppose $\wh{L}^n_P(\lambda) \ge \wh{L}^n_{P'}(\lambda)$. Let $Q'$ achieve $\inf_{Q \in \cM_n} L_n(P',Q;\lambda)$ (or be arbitrarily close). Then
\begin{eqnarray*}
&& \wh{L}^n_P(\lambda) - \wh{L}^n_{P'}(\lambda) \\
&&\quad \stackrel{{\rm (a)}}{=} \inf_{Q \in \cM_n} L_n(P,Q;\lambda) - \inf_{Q \in \cM_n} L_n(P',Q;\lambda) \\
&&\quad = \inf_{Q \in \cM_n} L_n(P,Q;\lambda) - L_n(P',Q';\lambda) \\
&&\quad \le L_n(P,Q';\lambda) - L_n(P',Q';\lambda) \\
&&\quad \stackrel{{\rm (b)}}{=} \overline{\rho}_n(P,Q') + \lambda n^{-1} \ell_{Q'} -\overline{\rho}_n(P',Q') - \lambda n^{-1} \ell_{Q'} \\
&&\quad =  \overline{\rho}_n(P,Q') - \overline{\rho}_n(P',Q') \\
&&\quad \stackrel{{\rm (c)}}{\le} \overline{\rho}_n(P,P') \\
&&\quad \stackrel{{\rm (d)}}{\le} \frac{1}{2} \rho_{\max} \dvar(P,P'),
\end{eqnarray*}
where in (a) we used Lemma~\ref{lm:ran_det}) in (b) we used the definition of $L_n(\cdot,Q';\lambda)$, in (c) we used the fact that $\overline{\rho}_n$ is a metric and the triangle inequality, and in (d) we used the bound (\ref{eq:wasser_var}).
\end{proof}

Finally, the lemma below shows that, for bounded distortion functions, Lagrange-optimal quantizers have finite codebooks:

\begin{lemma}
\label{lm:finite_codebook}
For positive integers $N,L$, let $\cQ_n(N,L)$ denote the set of all zero-memory variable-rate quantizers with block length $n$, such that for every $C^n \in \cQ_n(N,L)$, the associated binary code $\cS$ of $C^n$ satisfies $|\cS| \le N$ and $\ell(s) \le L$ for every $s \in \cS$. Let $P$ be a probability distribution on $\cX^n$. Then
$$
\wh{L}^n_P(\lambda) = \inf_{C^n \in \cQ_n(N,L)} L_P(C^n,\lambda),
$$
with $N \le 2^{2n\rho_{\max}/\lambda}$ and $L \le 2n\rho_{\max}/\lambda$.
\end{lemma}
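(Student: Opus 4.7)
My plan is to establish the nontrivial direction---that every $\epsilon$-optimal zero-memory variable-rate quantizer can be replaced, without raising its Lagrangian, by one in $\cQ_n(N,L)$---since the other direction $\wh{L}^n_P(\lambda)\le \inf_{C^n\in\cQ_n(N,L)}L_P(C^n,\lambda)$ is free from the subclass containment. As a preliminary step I would observe that the single-codeword quantizer that maps everything to a fixed $\hx_0\in\hcX$ with the empty binary string already lies in $\cQ_n(1,0)\subseteq\cQ_n(N,L)$ and has Lagrangian at most $\rho_{\max}$, so $\wh{L}^n_P(\lambda)\le\rho_{\max}$. This also disposes of the degenerate case $\wh{L}^n_P(\lambda)=\rho_{\max}$ outright, so I may henceforth assume $\wh{L}^n_P(\lambda)<\rho_{\max}$ and pick $\epsilon\in(0,\rho_{\max}-\wh{L}^n_P(\lambda))$.

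Given $C^n=(f,\varphi)$ with binary code $\cS$ and $L_P(C^n,\lambda)\le\wh{L}^n_P(\lambda)+\epsilon<\rho_{\max}$, I would first swap $f$ for the Lagrange-optimal encoder $f^*(x^n)\deq\argmin_{s\in\cS}[\rho_n(x^n,\varphi(s))+\lambda\ell_n(s)]$, which cannot raise the Lagrangian. Because $\lambda R_P(C^n)\le L_P(C^n,\lambda)<\rho_{\max}$, the expected codeword length $n R_P(C^n)$ is strictly below $n\rho_{\max}/\lambda$; consequently the shortest codeword $s_{\min}$ actually invoked by $f^*$ has length $\ell_{\min}<n\rho_{\max}/\lambda$. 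The crux is then a single Lagrange-optimality comparison against $s_{\min}$: for each used $s=f^*(x^n)$ the inequality $\rho_n(x^n,\varphi(s))+\lambda\ell_n(s)\le\rho_n(x^n,\varphi(s_{\min}))+\lambda\ell_n(s_{\min})$ combined with $0\le\rho_n\le\rho_{\max}$ yields $\ell(s)\le n\rho_{\max}/\lambda+\ell_{\min}\le 2n\rho_{\max}/\lambda=L$.

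Every codeword actually used now has length at most $L$, so Kraft's inequality pins the number of used codewords at $2^L=N$. Deleting unused codewords from $\cS$ therefore produces a quantizer in $\cQ_n(N,L)$ with the same Lagrangian as the Lagrange-optimized $C^n$, so $\inf_{C^n\in\cQ_n(N,L)}L_P(C^n,\lambda)\le\wh{L}^n_P(\lambda)+\epsilon$; sending $\epsilon\to 0$ closes the argument. The one point that will need care is the strict inequality $\ell_{\min}<n\rho_{\max}/\lambda$, since it is exactly what propagates into $\ell(s)\le L$; that is why I must split out the boundary case $\wh{L}^n_P(\lambda)=\rho_{\max}$ at the start and pick $\epsilon$ strictly less than the slack $\rho_{\max}-\wh{L}^n_P(\lambda)$.
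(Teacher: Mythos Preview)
Your argument is correct and follows essentially the same route as the paper: compare every invoked codeword against the shortest one via the Lagrange-optimal encoder to bound lengths by $2n\rho_{\max}/\lambda$, then invoke Kraft's inequality for the size bound, using the trivial zero-rate quantizer to cap $\wh{L}^n_P(\lambda)$ by $\rho_{\max}$. The only differences are cosmetic: you work with an $\epsilon$-optimal quantizer and send $\epsilon\to 0$ (the paper simply assumes an exact optimizer), and you explicitly prune unused codewords before applying Kraft (the paper is slightly loose on this point); your care about strict versus non-strict inequality for $\ell_{\min}$ is in fact unnecessary, since $\ell_{\min}\le n\rho_{\max}/\lambda$ already yields $\ell(s)\le 2n\rho_{\max}/\lambda$, so the case split on $\wh{L}^n_P(\lambda)=\rho_{\max}$ can be dropped.
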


\begin{proof} Let $C^n_*$ with encoder $\map{f_*}{\cX^n}{\cS}$ and decoder $\map{\varphi_*}{\cS}{\hcX^n}$ achieve the $n$th-order optimum $\wh{L}^n_P(\lambda)$ for $P$. Let $s_0 \in \cS$ be the shortest binary string in $\cS$, i.e.,
$$
\ell(s_0) = \min_{s \in \cS}\ell(s).
$$
Without loss of generality, we can take $f_*$ as the minimum-distortion encoder, i.e.,
$$
f_*(x^n) = \argmin_{s \in \cS} \left(\rho_n(x^n,\varphi_*(s)) + \lambda \ell_n(s)\right).
$$
Thus, for any $s \in \cS$ and any $x^n \in f^{-1}_*(s)$,
$$
\rho_n(x^n,\varphi_*(s)) + \lambda \ell_n(s) \le \rho_n(x^n,\varphi_*(s_0)) + \lambda \ell_n(s_0).
$$
Hence, $\ell(s) \le n\rho_{\max}/\lambda + \ell(s_0)$ for all $s \in \cS$. Furthermore, $L_P(C^n_*,\lambda) \ge \lambda \E_P\left\{\ell_n(f_*(X^n))\right\} \ge \lambda \ell_n(s_0)$.

Now pick an arbitrary reproduction string $\wh{x}^n_0 \in \hcX^n$, let $\varepsilon$ be the empty binary string (of length zero), and let $C^n_0$ be the zero-rate quantizer with the constant encoder $f_0(x^n) = \varepsilon$ and the decoder $\varphi_0(\varepsilon) = \wh{x}^n_0$. Then $L_P(C^n_0,\lambda) = \E_P\left\{\rho_n(X^n,\wh{x}^n_0)\right\} + \lambda \ell_n(\varepsilon) \le \rho_{\max}$. On the other hand, $L_P(C^n_*,\lambda) \le L_P(C^n_0,\lambda)$. Therefore,
$$
\lambda  \ell_n(s_0) \le L_P(C^n_*,\lambda) \le L_P(C^n_0,\lambda) \le \rho_{\max},
$$
so that $\ell(s_0) \le n\rho_{\max}/\lambda$. Hence,
$$
\ell(s) \le 2n\rho_{\max}/\lambda, \qquad \forall s\in \cS,
$$
Since the strings in $\cS$ must satisfy Kraft's inequality, we have
$$
1 \ge \sum_{s \in \cS} 2^{-\ell(s)} \ge |\cS|2^{-2n\rho_{\max}/\lambda},
$$
which implies that $|\cS| \le 2^{2n\rho_{\max}/\lambda}$.
\end{proof}

\section*{Acknowledgment}

The author would like to thank Andrew R.~Barron, Ioannis Kontoyiannis and Mokshay Madiman for stimulating discussions, and the anonymous reviewers for several useful suggestions that helped improve the paper.

\end{document}